\newtheorem{thm}{Theorem}
\newtheorem{cor}{Corollary}
\newtheorem{prop}{Proposition}
\DeclareMathOperator{\E}{\mathbb{E}}
\begin{document}

\title{Robust optimization with belief functions} 

\author[1]{Marc Goerigk}
\author[2]{Romain Guillaume}
\author[3]{Adam Kasperski\footnote{Corresponding author}}
\author[3]{Pawe{\l} Zieli\'nski}

\affil[1]{Network and Data Science Management, University of Siegen, Siegen, Germany,
           \texttt{marc.goerigk@uni-siegen.de}}
  \affil[2]{Universit{\'e} de Toulouse-IRIT Toulouse, France, \texttt{romain.guillaume@irit.fr}}
\affil[3]{
Wroc{\l}aw  University of Science and Technology, Wroc{\l}aw, Poland\\
            \texttt{\{adam.kasperski,pawel.zielinski\}@pwr.edu.pl}}

\date{}

\maketitle

 \begin{abstract}
 In this paper, an optimization problem with uncertain objective function coefficients is considered. The uncertainty is specified by providing a discrete scenario set, containing possible realizations of the objective function coefficients. The concept of belief function in the traditional and possibilistic setting is applied to define a set of admissible probability distributions over the scenario set. The generalized Hurwicz criterion is then used to compute a solution. In this paper, the complexity of the resulting problem is explored. Some exact and approximation methods of solving it are proposed.
 \end{abstract}
\noindent \textbf{Keywords}: robust optimization, Hurwicz criterion, belief function, possibility theory

\section{Introduction}

In this paper,
we wish to investigate a version
of
 the following generic optimization problem under uncertainty in  the objective function coefficients:
\begin{equation}
\label{pf}
\begin{array}{lll}
	\begin{array}{llll}
		\min  & f(\pmb{x},\pmb{c}) \\
			& \pmb{x}\in \mathbb{X},
	\end{array}
\end{array}
\end{equation}
where $\pmb{c}\in \mathbb{R}^l$ is a vector of the objective function coefficients and $\mathbb{X}\subseteq \mathbb{R}^n$ is a set of feasible solutions. We will assume that $\mathbb{X}$ is not empty and compact, $f$ takes only nonnegative values,
$f:\mathbb{X}\times  \mathbb{R}^l\rightarrow  \mathbb{R}_{+}$,
 and  attains a minimum in $\mathbb{X}$ for each fixed vector of coefficients $\pmb{c}$.
If $\mathbb{X}$ is a polytope, described by a system of linear constraints, and $f$ is linear, then we get the class of linear programming problems. In a more general case, if $\mathbb{X}$ is a convex set and $f$ is a convex function, then we get the class of convex optimization problems (see, e.g.,~\cite{BV04}). If $\mathbb{X}\subseteq \{0,1\}^n$, then~(\ref{pf}) is a combinatorial problem.  Depending on the structure of $\mathbb{X}$ and the definition of the cost function $f$, the problem~(\ref{pf}) can be solved in polynomial time or becomes NP-hard.

 In many practical applications, the vector of the objective function coefficients $\pmb{c}$ is uncertain. Namely, $\tilde{\pmb{c}}=(\tilde{c}_1,\dots,\tilde{c}_l)$ is then a random vector in $\mathbb{R}^l$, whose probability distribution is known, partially known or completely unknown. Accordingly, the cost $f(\pmb{x},\tilde{\pmb{c}})$ of a given solution $\pmb{x}\in \mathbb{X}$ is a random variable in $\mathbb{R}_+$.
Let $\mathcal{U}$ be a \emph{set of scenarios} containing possible realizations of $\tilde{\pmb{c}}$ called \emph{scenarios}.  In this paper, we will assume that $\mathcal{U}=\{\pmb{c}_1,\dots,\pmb{c}_K\}\subseteq \mathbb{R}^l$ contains a finite number of $K\geq 1$, explicitly listed, scenarios. Scenarios in $\mathcal{U}$ may correspond to possible states of the world or can be a result of sampling of the random vector~$\tilde{\pmb{c}}$.
We will use  $\mathcal{P}(\mathcal{U})$ to denote the set of all discrete probability distributions in~$\mathcal{U}$, that is, $\mathcal{P}(\mathcal{U})=\{\pmb{p}\in [0,1]^K: \sum_{k\in[K]} p_k=1\}$. If some additional knowledge in $\mathcal{P}(\mathcal{U})$ is available, then one can use it to provide a set of \emph{admissible probability distributions} $\mathcal{P}\subseteq \mathcal{P}(\mathcal{U})$, which is a restriction of $\mathcal{P}(\mathcal{U})$. For example, one can add some bounds $\underline{p}_i\leq p_i \leq \overline{p}_i$, $i\in [K]$ (we use the shortcut $[K]$ to denote the set $\{1,\dots,K\}$), on the probabilities~\cite{O12}, or some bounds on the mean or covariance matrix for $\tilde{\pmb{c}}$, which leads to various descriptions of  $\mathcal{P}$ (see, e.g.,~\cite{DY10, WKS14}).  Fix $\alpha\in [0,1]$ and define
$${\rm H}_{\alpha}(\pmb{x})= \alpha  {\rm \overline{E}}[f(\pmb{x},\tilde{\pmb{c}})] + (1-\alpha)  {\rm \underline{E}}[f(\pmb{x},\tilde{\pmb{c}})], $$
where
\begin{align*}
  &{\rm \overline{E}}[f(\pmb{x},\tilde{\pmb{c}})]=\sup_{\pmb{p}\in \mathcal{P}} \E_{\pmb{p}}[f(\pmb{x},\tilde{\pmb{c}})]=\sup_{\pmb{p}\in \mathcal{P}} \sum_{k\in [K]} p_k f(\pmb{x},\pmb{c}_k),\\
 & {\rm \underline{E}}[f(\pmb{x},\tilde{\pmb{c}})]=\inf_{\pmb{p}\in \mathcal{P}} \E_{\pmb{p}}[f(\pmb{x},\tilde{\pmb{c}})]=\inf_{\pmb{p}\in \mathcal{P}} \sum_{k\in [K]} p_k f(\pmb{x},\pmb{c}_k),
\end{align*}
and $\E_{\pmb{p}}[\cdot]$ is the expected value with respect to the probability distribution~$\pmb{p}$.
The quantity ${\rm H}_{\alpha}$ is called a \emph{generalized Hurwicz criterion,} and $\overline{\rm E}[\cdot]$, $\underline{\rm E}[\cdot ]$ are called the \emph{upper and lower expected values} of $f(\pmb{x},\tilde{\pmb{c}})$, respectively, representing the largest and  smallest expected solution costs for the probability distributions in $\mathcal{P}$ (see, e.g.,~\cite{TD20}).
The parameter $\alpha\in [0,1]$ is called a \emph{pessimism-optimism degree} and models a decision maker's risk aversion.

In this paper, we consider the following optimization problem:
\begin{equation}
\label{pfh}
\begin{array}{llll}
	\min &{\rm H}_{\alpha}(\pmb{x}) \\
	& \pmb{x}\in \mathbb{X}.
\end{array}
\end{equation}
Hence, we seek a feasible solution that minimizes the generalized Hurwicz criterion for a specified value of the pessimism-optimism degree $\alpha\in [0,1]$.

Let us focus on some special cases of~(\ref{pfh}).
 If $\mathcal{P}=\mathcal{P}(\mathcal{U})$, then ${\rm \overline{E}}[f(\pmb{x},\tilde{\pmb{c}})]=\max_{k\in [K]} f(\pmb{x},\pmb{c}_k)$. Indeed, in this case the worst probability distribution assigns the probability equal to~1 to scenario $\pmb{c}_k$ maximizing the cost of $\pmb{x}$. Similarly,
 ${\rm \underline{E}}[f(\pmb{x},\tilde{\pmb{c}})]=\min_{k\in [K]} f(\pmb{x},\pmb{c}_k)$. Problem~(\ref{pfh}) reduces then to minimizing the traditional Hurwicz criterion under scenario set $\mathcal{U}$:
 \begin{equation}
\label{pfhur}
\begin{array}{llll}
	\min & \displaystyle \alpha\max_{k\in [K]} f(\pmb{x},\pmb{c}_k)+(1-\alpha)\min_{k\in [K]} f(\pmb{x},\pmb{c}_k) \\
	& \pmb{x}\in \mathbb{X}.
\end{array}
\end{equation}
If $\alpha=1$ in~(\ref{pfhur}), then we get the \emph{robust min-max problem}, discussed for example, in~\cite{KY97}.

If $\mathcal{P}$ contains only one probability distribution, $\pmb{p}$, then~(\ref{pfh}) is equivalent to minimizing the expected solution cost under $\pmb{p}$, i.e.
  \begin{equation}
\label{pfexp}
\begin{array}{llll}
	\min & \E_{\pmb{p}}[f(\pmb{x},\tilde{\pmb{c}})]=\displaystyle \sum_{k\in [K]} p_k f(\pmb{x},\pmb{c}_k) \\
	& \pmb{x}\in \mathbb{X}.
\end{array}
\end{equation}
 
 In the traditional robust optimization approach (see, e.g.,~\cite{BN09, KY97}) no additional information in scenario set $\mathcal{U}$ is provided. This is equivalent to saying that any probability distribution in $\mathcal{P}(\mathcal{U})$ for $\tilde{\pmb{c}}$ is admissible. The problem of type~(\ref{pfhur}) can  then  be solved. However, in many practical applications, some additional knowledge in $\mathcal{U}$ can be available. In particular, some evidence can indicate that scenarios from one subset of $\mathcal{U}$ are more likely to occur than scenarios from another subset. This knowledge can be naturally taken into account by using Dempster and Shafer \emph{theory of evidence}~\cite{D67, SH76}. The key concept of this theory is nonnegative masses assigned to the subsets of scenarios. Given a subset of scenarios $A\subseteq \mathcal{U}$, the mass assigned to $A$ can be interpreted as a \emph{subjective probability} of the event $A$, which supports the claim that the true scenario will belong to $A$. The masses provided induce a \emph{belief function} in $2^{\mathcal{U}}$, which constitutes a lower bound on the probability distribution in $\mathcal{U}$. This fact can be used to define the set $\mathcal{P}\subseteq \mathcal{P}(\mathcal{U})$ of admissible probability distributions for $\tilde{\pmb{c}}$ and   then apply the solution concept~(\ref{pfh}). The concept of belief function can be extended by using \emph{possibility theory} (see, e.g.,~\cite{DP88}). That is, a mass can be assigned to a fuzzy subset of scenarios. The membership function of such a fuzzy subset is a \emph{possibility distribution} in $\mathcal{U}$. The belief  of an event $A\subseteq \mathcal{U}$ is then the expected necessity of $A$. By using a decomposition of  fuzzy sets into $\lambda$-cuts~\cite{DP90consonant}, we can reduce the possibilistic model of uncertainty into a model based on the traditional theory of evidence.
 
 This paper is an extended version of the conference papers~\cite{GKZ21, GKZ22}. We generalize the results to a wider class of optimization problems. We strengthen the negative complexity result shown in~\cite{GKZ21}. We propose another approximation algorithm for~(\ref{pfh}), which is an alternative to the one constructed in~\cite{GKZ21}. Finally, we show how to solve the problem with a possibilistic model of uncertainty, which remained open in~\cite{GKZ21}.
 
This paper is organized as follows. In Section~\ref{secd}, we recall basic notions of the theory of evidence, namely the notions of the mass and belief functions. In Section~\ref{secrob}, we apply them to define the set $\mathcal{P}\subseteq \mathcal{P}(\mathcal{U})$ of admissible probability distributions in $\mathcal{U}$. We then consider the optimization problem~(\ref{pfh}) with $\mathcal{P}$. In Section~\ref{seccompl} we investigate the complexity of the problem~(\ref{pfh}). We show that it is NP-hard for each $\alpha\in [0,1)$ even in the very restrictive case when $\mathbb{X}$ is a polytope in $[0,1]^n$ and the function $f$ is linear. In Section~\ref{secsol} we propose a mixed integer programming problem for solving~(\ref{pfh}). We also identify some special cases of the problem which can be solved in polynomial time. In Section~\ref{secappr} we construct  approximation algorithms for~(\ref{pfh}). Finally, in Section~\ref{secpos} we extend the model of uncertainty to the possibilistic (fuzzy) case.

\section{Belief functions and imprecise probabilities}
\label{secd}
In this section, we recall the concept of a belief function and its relationships with imprecise probabilities (more details can be found in~\cite{D67, SH76, TD20}).
Let $\Omega=\{\omega_1,\dots,\omega_K\}$ be a  finite non-empty set of states. A \emph{mass function} is a mapping $m$ from $2^{\Omega}$ to $[0,1]$ such that
$$\sum_{A\subseteq \Omega} m(A)=1$$
and $m(\emptyset)=0$. Given $A$, $m(A)$ is available evidence that supports the claim that the actual state belongs to $A$. 
Set $A\subseteq \Omega$ such that $m(A)>0$ is called a \emph{focal set}.
The mass function $m$ induces the following \emph{belief function}:
\begin{equation}
\label{belf}
{\rm Bel}(A)=\sum_{B\subseteq A} m(B).
\end{equation}
If all focal sets are singletons, then ${\rm Bel}$ is a probability distribution in $\Omega$, so it provides  complete probabilistic information in~$\Omega$. On the other hand, if there is only one focal set, say $E$, then ${\rm Bel}$ is a \emph{logical measure}, describing imprecise information in $\Omega$. We only know for sure that event $E$ will occur and nothing more. In the extreme case when $E=\Omega$, there is complete uncertainty in $\Omega$

A probability distribution ${\rm P}$ in $\Omega$ is \emph{compatible with} ${\rm Bel}$ if ${\rm P}(A)\geq {\rm Bel}(A)$ for each event $A\subseteq \Omega$ (see, e.g.,~\cite{TD20}). Therefore, ${\rm Bel}(A)$ is a lower bound on the probability ${\rm P}(A)$.
If ${\rm Bel}$ is induced by the mass function $m$, then $\mathcal{P}(m)$ denotes the set of all probability distributions compatible with ${\rm Bel}$
\begin{equation}
\label{pm}
\mathcal{P}(m)=\{{\rm P}\in \mathcal{P}(\Omega): {\rm P}(A)\geq {\rm Bel}(A), A\subseteq [K]\},
\end{equation}
where $\mathcal{P}(\Omega)$ is the set of all probability distributions in $\Omega$.
 As $\Omega$ is finite, the set $\mathcal{P}(m)$ can be described by the following system of linear constraints:
\begin{equation}
\label{consp}
 	\begin{array}{llll}
		\displaystyle \sum_{k\in A} p_k\geq {\rm Bel}(A) & A\subset [K]\\
		p_1+\dots+p_K=1\\
		p_k\geq 0 & k\in [K],
	\end{array}
\end{equation}
where $p_k$ is a probability that state $\omega_k$ will occur. In Section~\ref{secpos} we will study a generalization of the belief function~(\ref{belf}). In particular, we will show that~(\ref{consp}) has at least one solution, so $\mathcal{P}(m)$ is nonempty.

%

\section{Robust Optimization with Belief Functions (ROBF)}
\label{secrob}

In this section, we will use the concept of belief function described in Section~\ref{secd} to model the uncertainty in the objective function of~(\ref{pf}). Recall that the vector of the objective function coefficients $\tilde{\pmb{c}}$ can take one of a finite set of values in $\mathcal{U}=\{\pmb{c}_1,\dots,\pmb{c}_K\}\subseteq \mathbb{R}^l$, $K\geq 1$. In order to apply the concepts from Section~\ref{secd}, we will identify scenarios with states $\omega_1,\dots,\omega_K$, i.e. $\Omega=\mathcal{U}$.  Let $m: 2^\mathcal{U}\rightarrow [0,1]$ be a mass function in $\mathcal{U}$. 
To simplify notation, we will identify each subset of scenarios $A\subseteq \mathcal{U}$ with the set of indices of the elements of~$A$. Therefore, $m$ can be equivalently defined as a mass function in the power set $2^{[K]}$ of the indices, i.e. $m: 2^{[K]}\rightarrow [0,1]$. Let
$$\mathcal{F}=\{F\subseteq [K]: m(F)>0\}=\{F_1,\dots,F_{\ell}\}$$
be the set of all focal sets of $[K]$ and $z(\mathcal{F})=\max_{F\in \mathcal{F}} |F|$ be the size of the largest focal set. The mass function $m$ induces the belief function ${\rm Bel}$ in $\mathcal{U}$, which in turn provides us a set of $\mathcal{P}(m)$ of admissible probability distributions in $\mathcal{U}$. The set $\mathcal{P}(m)$ is described by~(\ref{consp}), where $p_k$ is now the probability that scenario $\pmb{c}_k$ will occur. The following equations are true (see, e.g.,~\cite{TD20}):
\begin{align}
\overline{\rm E}[f(\pmb{x},\tilde{\pmb{c}})]&=\sup_{{\rm P}\in \mathcal{P}(m)}  \E_{{\rm P}}[f(\pmb{x},\tilde{\pmb{c}})]=\sum_{F\in \mathcal{F}}m(F)\max_{k\in F}f(\pmb{x},\pmb{c}_k),\label{ue}\\
\underline{\rm E}[f(\pmb{x},\tilde{\pmb{c}})]&=\inf_{{\rm P}\in \mathcal{P}(m)} \E_{{\rm P}}[f(\pmb{x},\tilde{\pmb{c}})]=\sum_{F\in \mathcal{F}}m(F)\min_{k\in F}f(\pmb{x},\pmb{c}_k)\label{le}.
\end{align}
A proof of~(\ref{ue}) and~(\ref{le}) is shown in Appendix~\ref{app:a} for completeness. Using~(\ref{ue}) and~(\ref{le}) we can represent the problem~(\ref{pfh}) as the following \emph{robust opimization problem with belief functions}:
\begin{equation}
\label{pfh1}
(\text{ROBF})\;\;
\begin{array}{lll}
\min & 
\displaystyle {\rm H}_{\alpha}(\pmb{x})= \sum_{F\in \mathcal{F}}m(F)\left(\alpha\max_{k\in F}f(\pmb{x},\pmb{c}_k) + (1-\alpha)\min_{k\in F}f(\pmb{x},\pmb{c}_k)\right) \\
& \pmb{x}\in \mathbb{X}
\end{array}
\end{equation}
Let us focus on the first complexity issue that can arise while solving~(\ref{pfh1}). In general, specifying the mass function $m$ may require providing up to $2^K$ numbers, which can be intractable for larger $K$. In practice, we can assume that the number of focal sets $|\mathcal{F}|$ is bounded by a polynomial in $K$.  Alternatively, one can also provide a rule for assigning the masses to subsets of $K$, which leads to a reformulation of~(\ref{pfh1}) of a polynomial size in $n$ and $K$ (i.e. in the size of the problem). 

We now  show several special cases of~(\ref{pfh1}), which have been already discussed in the literature. If $m([K])=1$, then we get the problem with the  Hurwicz criterion~(\ref{pfhur}), which reduces to the robust min-max problem when additionally $\alpha=1$.
On the other hand, if $z(\mathcal{F})=1$ (i.e. all focal sets are singletons), then~(\ref{pfh1}) reduces to minimizing the expected solution cost for the probability distribution $p_k=m(\{k\})$, $k\in[K]$, i.e. to the problem~(\ref{pfexp}).

Let us describe yet another special case of~(\ref{pfh1}). In robust optimization with scenario set $\mathcal{U}$ the \emph{Ordered Weighted Averaging} criterion (OWA for short), proposed in~\cite{YA88}, is commonly used (see, e.g.,~\cite{CG15,KZ15,OO12}). Let $\pmb{w}=(w_1,\dots,w_K)$ be a vector of nonnegative weights that sum up to~1. Given a solution $\pmb{x}\in \mathbb{X}$, let $\sigma$ be a permutation of $[K]$ such that $f(\pmb{x},\pmb{c}_{\sigma(1)})\geq \dots\geq f(\pmb{x},\pmb{c}_{\sigma(K)})$, then 
$${\rm Owa}_{\pmb{w}}(\pmb{x})=\sum_{k\in [K]} w_k f(\pmb{x},\pmb{c}_{\sigma(k)}).$$
It is easy to see that ${\rm Owa}_{\pmb{w}}(\pmb{x})=\max_{k\in [K]} f(\pmb{x},\pmb{c}_k)$, when $\pmb{w}=(1,0,\dots,0)$ and ${\rm Owa}_{\pmb{w}}(\pmb{x})=\min_{k\in [K]} f(\pmb{x},\pmb{c}_k)$, when $\pmb{w}=(0,0,\dots,1)$.

Define $m(A)=\frac{2}{K(K-1)}$ for each $A\subseteq [K]$ such that $|A|=2$ and $m(A)=0$, otherwise. Hence, all focal sets 
are
cardinality of~2 and they are assigned the same mass value.
Making use of~(\ref{ue}), we obtain
$$\overline{{\rm E}}[f(\pmb{x},\tilde{\pmb{c}})]=\frac{2}{K(K-1)}\sum_{\{i,j\}\subseteq [K]: i\neq j}\max\{f(\pmb{x},\pmb{c}_i), f(\pmb{x},\pmb{c}_j)\}.$$
Order the scenarios so that $f(\pmb{x},\pmb{c}_1)\geq \dots\geq f(\pmb{x},\pmb{c}_K)$. Then
$$\overline{{\rm E}}[f(\pmb{x},\tilde{\pmb{c}})]=\frac{2}{K(K-1)}\left( (K-1)f(\pmb{x},\pmb{c}_1)+(K-2)f(\pmb{x},\pmb{c}_2)+\dots+(K-K) f(\pmb{x},\pmb{c}_K)\right)$$
Hence,
$$\overline{{\rm E}}[f(\pmb{x},\tilde{\pmb{c}})]={\rm Owa}_{\pmb{w}}(\pmb{x}),$$
where $\pmb{w}=(\frac{2(K-1)}{K(K-1)}, \frac{2(K-2)}{K(K-1)},\dots,0)$. 
Similarly, using~(\ref{le}) we get
$$\underline{{\rm E}}[f(\pmb{x},\tilde{\pmb{c}})]={\rm Owa}_{\pmb{w}'}(\pmb{x}),$$
where $\pmb{w}'=(0,\frac{2\cdot 1}{K(K-1)},\frac{2\cdot 2}{K(K-1)},\dots, \frac{2(K-1)}{K(K-1)})$. So, $w'_k=w_{K-k+1}$, $k\in [K]$.
It is not difficult to generalize this example, and this case is described in the following proposition:
\begin{prop}
\label{propowa}
Suppose that
$m(A)=1/{K \choose l}$ for each $A\subseteq [K]$ such that $|A|=l\leq K$
and $m(A)=0$,
 otherwise. Then $\overline{{\rm E}}[f(\pmb{x},\tilde{\pmb{c}})]$ is the OWA criterion
with weights $w_k={K-k\choose l-1}/{K \choose l}$ for $k=1,\dots, K-l+1$ and $w_k=0$ for $k=K-l+2,\dots, K$.
Similarly, $\underline{{\rm E}}[f(\pmb{x},\tilde{\pmb{c}})]$ is the OWA criterion with weights $w'_k=w_{K-k+1}$, $k\in [K]$.
\end{prop}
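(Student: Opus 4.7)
The plan is to apply the closed-form expressions~(\ref{ue}) and~(\ref{le}) directly and then perform a combinatorial reorganization of the resulting sums. Since every focal set receives the uniform mass $1/\binom{K}{l}$ and the focal family consists of all $l$-element subsets of $[K]$, formula~(\ref{ue}) immediately yields
\[
{\rm \overline{E}}[f(\pmb{x},\tilde{\pmb{c}})] = \frac{1}{\binom{K}{l}} \sum_{\substack{F\subseteq [K] \\ |F|=l}} \max_{k\in F} f(\pmb{x}, \pmb{c}_k),
\]
and analogously for ${\rm \underline{E}}$ with $\min$ in place of $\max$. The task therefore reduces to evaluating these double sums.

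The key step is to fix $\pmb{x}$ and relabel the scenarios so that $f(\pmb{x}, \pmb{c}_1) \geq f(\pmb{x}, \pmb{c}_2) \geq \cdots \geq f(\pmb{x}, \pmb{c}_K)$, exactly as in the $l=2$ example preceding the statement. Under this ordering, $\max_{k\in F} f(\pmb{x}, \pmb{c}_k) = f(\pmb{x}, \pmb{c}_{\min F})$. Grouping the $l$-subsets $F$ by their smallest index $k$, the remaining $l-1$ indices must be chosen from $\{k+1,\dots,K\}$, giving exactly $\binom{K-k}{l-1}$ such subsets (nonzero only for $k\leq K-l+1$). Collecting terms yields
\[
{\rm \overline{E}}[f(\pmb{x},\tilde{\pmb{c}})] = \sum_{k=1}^{K-l+1} \frac{\binom{K-k}{l-1}}{\binom{K}{l}} f(\pmb{x}, \pmb{c}_k),
\]
which is precisely the OWA expression with the claimed weights $w_k$. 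A consistency check via the hockey-stick identity $\sum_{k=1}^{K-l+1}\binom{K-k}{l-1} = \binom{K}{l}$ confirms $\sum_k w_k = 1$.

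The argument for ${\rm \underline{E}}$ is symmetric: using the same ordering, $\min_{k\in F} f(\pmb{x}, \pmb{c}_k) = f(\pmb{x}, \pmb{c}_{\max F})$, and grouping by the largest index $k$ leaves $\binom{k-1}{l-1}$ subsets, valid for $k\geq l$. The index change $k \mapsto K-k+1$ then directly verifies the identity $w'_k = w_{K-k+1}$ and that $w'_k = 0$ for $k = 1,\dots,l-1$. The main obstacle is purely bookkeeping, namely tracking the two indexing conventions (scenarios ordered by decreasing cost versus OWA weights ordered by rank); once formulas~(\ref{ue}) and~(\ref{le}) are in hand, the remainder is an elementary counting exercise.
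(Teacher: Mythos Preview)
Your argument is correct and follows essentially the same route as the paper: apply~(\ref{ue}), order the scenarios decreasingly, and count the $l$-subsets whose minimum index equals $k$ to obtain the weight $\binom{K-k}{l-1}/\binom{K}{l}$. Your treatment is in fact slightly more explicit than the paper's (you spell out the $\underline{\rm E}$ case and add the hockey-stick check), but there is no substantive difference in method.
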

\begin{proof}
	The following equation
	$$\overline{{\rm E}}[f(\pmb{x},\tilde{\pmb{c}})]=1/{K \choose l}\sum_{A\subseteq [K]: |A|=l} \max_{k\in A} f(\pmb{x},\pmb{c}_k)$$
	holds.
	Let us order the scenarios so that $f(\pmb{x},\pmb{c}_1)\geq \dots\geq f(\pmb{x},\pmb{c}_K)$. There are exactly ${K-k\choose l-1}$ subsets $A\subseteq [K]$ such that $A=\{k\} \cup A'$, $|A|'=l-1$, and $A'\subseteq \{k+1,\dots, K\}$. Therefore, $f(\pmb{x},\pmb{c}_k)$ is the maximizer over $k\in A$. Thus,
	$$\overline{{\rm E}}[f(\pmb{x},\tilde{\pmb{c}})]=1/{K \choose l}\left({K-1\choose l-1} f(\pmb{x},\pmb{c}_1)+{K-2\choose l-1} f(\pmb{x},\pmb{c}_2)+\dots +{l-1\choose l-1} f(\pmb{x},\pmb{c}_{K-l+1})\right)$$
	and the proposition follows. The case for $\underline{{\rm E}}[f(\pmb{x},\tilde{\pmb{c}})]$ is just symmetric.
\end{proof}

\section{Computational complexity of the ROBF problem}
\label{seccompl}

From the known results in robust combinatorial optimization, it  immediately follows that~(\ref{pfh1}) is NP-hard when $\mathbb{X}\subseteq \{0,1\}^n$. Indeed, the min-max problem, being a special case of~(\ref{pfh1}), is already NP-hard for $K=2$ and strongly NP-hard when $K$ is unbounded for basic selection and network problems (see~\cite{KZ16b} for a survey). The following result for $\alpha\in [0,1)$ shows that~(\ref{pfh1}) can also be  NP-hard if the underlying problem~(\ref{pf}) is a 
linear programming problem (a continuous case).
We show in Section~\ref{secsol} that the case $\alpha=1$ is easier to solve.
\begin{thm}
\label{compl}
The problem~(\ref{pfh1}) is NP-hard for each $\alpha \in [0,1)$, when $\mathbb{X}$ is a polyhedron in $[0,1]^n$ and the function $f$ is linear.
\end{thm}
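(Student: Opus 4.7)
The plan is a polynomial-time many-one reduction from a known NP-hard problem to the linear, polyhedral instance of (\ref{pfh1}). The lever is that, for $\alpha<1$, the term $(1-\alpha)\min_{k\in F}f(\pmb{x},\pmb{c}_k)$ contributes with positive weight, so each focal-set summand in (\ref{pfh1}) is the sum of a convex and a concave piecewise-linear function in $\pmb{x}$. The concave piece makes ${\rm H}_{\alpha}$ non-convex, and its minimum over a polytope is attained at a vertex, which is what lets us force ``$\{0,1\}$-like'' choices despite $\mathbb{X}$ being continuous.

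A natural starting point is a reduction from \textsc{Partition}. Given positive integers $a_1,\dots,a_n$ with $\sum_i a_i=2S$, I would use variables $y_i\in[0,1]$, take $\mathbb{X}=[0,1]^n$, and, for each $i$, introduce a size-two focal set $F_i=\{2i-1,2i\}$ with scenario costs $M a_i y_i$ and $M a_i(1-y_i)$, for a big-$M$ calibrated to $\alpha$. Inside $F_i$, the inner $\min$ vanishes exactly at $y_i\in\{0,1\}$, which penalises fractional coordinates. An additional focal set then encodes the balance condition $\sum_i a_i y_i = S$ through the $\max$ of two linear scenarios realising $|\sum_i a_i y_i-S|$ (with a constant offset that keeps $f\ge 0$). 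On any $\{0,1\}$-valued $\pmb{y}$, the total ${\rm H}_{\alpha}$ then equals a precomputable threshold $V^{\star}$ iff the selected set is an exact partition of the $a_i$.

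Correctness has the usual two directions. A \textsc{Yes}-instance yields a $\{0,1\}$-solution matching $V^{\star}$. Conversely, a feasible $\pmb{y}$ with ${\rm H}_{\alpha}\le V^{\star}$ must (a) be $\{0,1\}$-valued, because the $(1-\alpha)$-weighted concavity penalty from the pair-focal-sets is strictly positive at fractional coordinates, and (b) induce an exact partition, because the integrality of the $a_i$ enforces at least a unit deviation in the balance term otherwise. Choosing $M$ polynomial in $n$, $S$, and $1/(1-\alpha)$ makes both bounds quantitatively tight and keeps the whole construction polynomially sized.

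The main obstacle is the fractional-versus-integer comparison in the \textsc{No} direction. The convex $\max$ part pulls the other way, and for $\alpha$ close to $1$ the naive size-two pair $(M a_i y_i, M a_i(1-y_i))$ actually minimises its Hurwicz contribution at $y_i=\tfrac12$ rather than at the endpoints. To make the reduction uniform in $\alpha\in[0,1)$ one therefore has to either rescale carefully or introduce auxiliary asymmetric scenarios inside $F_i$ (for example a third linear cost that tips the inner minimum towards an endpoint), and then establish a rounding lemma showing that any fractional coordinate can be snapped to its nearer endpoint without increasing ${\rm H}_{\alpha}$. A secondary, more mundane difficulty is maintaining $f\ge 0$ on all of $\mathbb{X}$, which is handled by uniform constant offsets in the cost vectors that are absorbed into $V^{\star}$.
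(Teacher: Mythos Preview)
Your plan is a genuinely different route from the paper's, and the obstacle you flag at the end is not a technicality --- it is the gap. As you yourself compute, with the size-two pair $(Ma_iy_i,\,Ma_i(1-y_i))$ the Hurwicz summand equals $Ma_i\bigl(\tfrac12+(2\alpha-1)|y_i-\tfrac12|\bigr)$, which for every $\alpha>\tfrac12$ is \emph{minimised} at $y_i=\tfrac12$, so the integrality gadget works in the wrong direction on the whole range $\alpha\in(\tfrac12,1)$. The balance gadget has the mirror-image failure: with two scenarios $\sum_i a_iy_i$ and $2S-\sum_i a_iy_i$ the summand is $S+(2\alpha-1)\bigl|\sum_i a_iy_i-S\bigr|$, which for $\alpha<\tfrac12$ \emph{rewards} imbalance. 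Your suggested remedies (``rescale carefully'', ``auxiliary asymmetric scenarios'', ``rounding lemma'') are not carried out; in particular, a third scenario ``that tips the inner minimum towards an endpoint'' does not by itself fix the integrality piece --- what one actually needs is an extra scenario that \emph{dominates} so that the $\max$ over the focal set becomes constant and only the concave $\min$ survives. Until those gadgets are written down and shown to interact correctly under a single choice of $M$, the reduction is not complete for any $\alpha\in[\tfrac12,1)$.

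For comparison, the paper avoids this sign-of-$(2\alpha-1)$ trap entirely by reducing from \textsc{MinSat} and choosing the mass function so that all focal sets have the same cardinality; by Proposition~\ref{propowa} this turns ${\rm H}_{\alpha}$ into $\alpha\,{\rm Owa}_{\pmb{w}}+(1-\alpha)\,{\rm Owa}_{\pmb{w}'}$ for explicit weight vectors. Extra ``dummy'' scenarios with cost~$2$ force the $r$ largest costs to equal~$2$ for \emph{every} $\pmb{x}$, so ${\rm Owa}_{\pmb{w}}(\pmb{x})\equiv 2$ and only the $(1-\alpha)$-weighted $\underline{\rm E}$ part carries information. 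The question then reduces to whether ${\rm Owa}_{\pmb{w}'}(\pmb{x})=0$, i.e.\ whether at least $r$ scenario costs vanish, and this encodes the \textsc{MinSat} instance directly --- uniformly in $\alpha\in[0,1)$, with no case split and no big-$M$ balancing act. If you want to salvage the \textsc{Partition} approach, the cleanest fix is to add a dominating constant scenario to every integrality focal set (so the $\max$ is constant and the $(1-\alpha)\min$ genuinely penalises fractionality for all $\alpha<1$) and, dually, a zero scenario to the balance focal set (so the $\min$ is $0$ and the $\alpha\max$ penalises imbalance for all $\alpha>0$); but you then still owe the quantitative accounting that these two penalties cannot trade off against each other.
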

\begin{proof}
Consider the following \textsc{MINSAT} problem. Given a set of $s$  boolean variables $q_1,\dots q_s$, a collection of $t$ clauses $\mathcal{C}_1,\dots, \mathcal{C}_t$ over the boolean variables and a positive integer $r< t$. 
We ask if there is a truth assignment to the variables in which at least $r$ clauses are not satisfied. The \textsc{MINSAT} problem is known to be NP-complete, even if each clause contains at most two literals~\cite{KM94}.  

We now propose a  reduction from  \textsc{MINSAT} to problem~(\ref{pfh1}).
Given an instance of   \textsc{MINSAT}, we build the corresponding instance of problem~(\ref{pfh1}).
We can assume w.l.o.g. that~$t$ is even and $r=t/2$ (as we can add a valid number of dummy clauses to the instance). 
Let us define variables $x_i$ and $\overline{x}_i$ for each $i\in [s+1]$, so the number of variables is $n=2s+2$. 
Define the polyhedron $\mathbb{X}$ by using constraints $x_i\geq 0$, $\overline{x}_i\geq 0$ and $x_i+\overline{x}_i=1$, $i\in [s+1]$. For each clause $\mathcal{C}_i$ we form scenario $\pmb{c}_i$ as follows.
If $q_j\in \mathcal{C}_i$, then the cost of $x_j$ is~1; if $\overline{q}_j\in \mathcal{C}_i$
($\overline{q}_j$ is the negation of~$q_j$), then the cost of $\overline{x}_j$
is~1 under $\pmb{c}_i$; the costs of the remaining variables under $\pmb{c}_i$ are set to~0. Next, we add $r$ scenarios $\pmb{c}'_1,\dots,\pmb{c}'_r$ under which the costs of $x_{s+1}$ and $\overline{x}_{s+1}$ are~2 and the costs of the remaining variables are~0.
 Hence, the total number of scenarios formed is $K=t+r=3r$.  We choose the linear cost function $f(\pmb{x},\pmb{c})=\pmb{c}^T\pmb{x}$. Finally, the mass function $m$ assigns the mass  $1/{t+r \choose 2r+1}=1/{3r \choose 2r+1}$ to each subset of scenarios with the cardinality 
of~$2r+1$ and mass~0 to all the remaining subsets. Observe (see Proposition~\ref{propowa}), that
\begin{equation}
\label{owared}
{\rm H}_{\alpha}(\pmb{x})=\alpha {\rm Owa}_{\pmb{w}}(\pmb{x}) + (1-\alpha){\rm Owa}_{\pmb{w}'}(\pmb{x}).
\end{equation}
where
$$\pmb{w}=\left(\underbrace{\frac{{3r-1\choose 2r}}{{3r \choose 2r+1}}, \dots, \frac{{3r-r\choose 2r}}{{3r \choose 2r+1}}}_{r \;\text{positive weights}},\underbrace{0,\dots,0}_{2r\; \text{weights}} \right), \;\;  \pmb{w}'=\left(\underbrace{0,\dots,0}_{2r\; \text{weights}}, \underbrace{\frac{{3r-r\choose 2r}}{{3r \choose 2r+1}}, \dots, \frac{{3r-1\choose 2r}}{{3r \choose 2r+1}}}_{r \;\text{positive weights}}, \right)$$
Therefore, only the first $r$ weights are positive in $\pmb{w}$ and only the last $r$ weights are positive in $\pmb{w}'$. 

 To illustrate the reduction, consider a sample instance with variables $q_1,q_2,q_3,q_4$, clauses $(q_1 \vee \overline{q}_2)$,
  $(\overline{q}_2\vee q_3)$, $(q_1 \vee q_3)$, $(\overline{q}_3 \vee q_4)$, 
  $(q_1\vee \overline{q}_4)$, $(\overline{q}_2 \vee \overline{q}_4)$, $(\overline{q}_1,\vee q_2)$, $(q_1 \vee q_2)$ and $r=4$.  
  The scenarios for this instance are shown in Table~\ref{tab1}.
 \begin{center}
 \begin{table}[ht]
 	 \centering
 	 \caption{Scenarios for the sample instance of \textsc{MINSAT}.}\label{tab1}
 	\begin{tabular}{cc|cccccccccccccccc}
    	&  $\pmb{x}$ & $\pmb{c}_1$ & $\pmb{c}_2$ & $\pmb{c}_3$ & $\pmb{c}_4$ & $\pmb{c}_5$ & $\pmb{c}_6$ & $\pmb{c}_7$ & $\pmb{c}_8$& $\pmb{c}'_1$ & $\pmb{c}'_2$ & $\pmb{c}'_3$ & $\pmb{c}'_4$ \\ \hline
	0 & $x_1$ & 1 & 0 & 1 & 0 & 1 & 0 & 0 & 1 & 0& 0 & 0 & 0\\
	1 & $\overline{x}_1$ & 0 & 0 & 0 & 0 & 0 & 0 & 1 & 0 & 0 & 0 & 0 & 0\\
	1 & $x_2$ & 0 & 0 & 0 & 0 & 0 & 0 & 1 & 1 & 0 & 0 & 0 & 0\\
	0 & $\overline{x}_2$ & 1 & 1 & 0 & 0 & 0 &1 & 0 & 0 & 0 & 0 & 0 & 0 \\
	0 & $x_3$ & 0 & 1 & 1 & 0 & 0 & 0 & 0 & 0 & 0 & 0 & 0 & 0\\
	1 & $\overline{x}_3$ & 0 & 0 & 0 & 1 & 0 & 0 & 0 & 0 & 0 & 0 & 0 & 0\\
	0 & $x_4$ & 0 & 0 & 0 & 1 & 0  & 0 & 0 & 0 & 0 & 0 & 0 & 0\\
	1 & $\overline{x}_4$ & 0 & 0 & 0 & 0 & 1 & 1 & 0 & 0& 0 & 0 & 0 & 0\\ \hline
	   & $x_5$ & 0 & 0 & 0 & 0 & 0 & 0 & 0 & 0& 2 & 2 & 2 & 2\\
	   & $\overline{x}_5$ & 0 & 0 & 0 & 0 & 0 & 0 & 0 & 0& 2 & 2 & 2 & 2\\
	\end{tabular}
 \end{table}
 \end{center}
 We fix $m(A)=1/{12 \choose 9}=1/220$ for each $A\subseteq [K]$ such that $|A|=9$ and~0, otherwise. Polyhedron $\mathbb{X}$ is defined by $x_i+\overline{x}_i=1$, $x_i,\overline{x}_i\geq 0$, $i=1,\dots,5$. The weights in~(\ref{owared}) are then $\pmb{w}=(\frac{165}{220}, \frac{45}{220},\frac{9}{220},\frac{1}{220},0,0,0,0,0,0,0,0)$ and $\pmb{w}'=(0,0,0,0,0,0,0,0,\frac{1}{220}, \frac{9}{220},\frac{45}{220},\frac{165}{220})$.

Observe that in this proof, we have constructed an instance of the problem~\eqref{pfh1} with an exponential number of focal sets. However, it is possible to describe this instance  in a polynomial of the size of the instace of
\textsc{MINSAT} due to Proposition~\ref{propowa}.
Hence, the reduction is polynomial.

 We now show that the answer to \textsc{MINSAT} is yes if  and only if $H_{\alpha}(\pmb{x})\leq 2\alpha$ for some feasible solution $\pmb{x}\in\mathbb{X}$.  Because the cost of any solution $\pmb{x}$ under scenarios $\pmb{c}'_1, \dots,\pmb{c}'_r$ 
 is equal to~2 and under scenarios $\pmb{c}_1,\dots,\pmb{c}_t$ is not greater than~2, we get  ${\rm Owa}_{\pmb{w}}(\pmb{x})=2$ for each $\pmb{x}\in \mathbb{X}$. Hence, it is enough to show that  the answer to \textsc{MINSAT} is yes if  and only if ${\rm Owa}_{\pmb{w}'}(\pmb{x})=0$ for some feasible solution $\pmb{x}\in\mathbb{X}$
 
 Assume that the answer to \textsc{MINSAT} is yes and let $q_1,\dots,q_s$ be a truth assignment to the variables that satisfies no more than $r$ clauses. Let us form a feasible solution $\pmb{x}\in \mathbb{X}$ such that $x_j=1, \overline{x}_j=0$ if $q_j=1$  and $x_j=0, \overline{x}_j=1$ if $q_j=0$. We also fix $x_{s+1}=1$ and $\overline{x}_{s+1}=0$. By the construction, there are at least $r$ scenarios under which the cost of $\pmb{x}$ is~0 and thus no more than $2r$ scenarios under which the cost of $\pmb{x}$ is positive. Hence, ${\rm Owa}_{\pmb{w}'}(\pmb{x})=0$.
 
 Assume that ${\rm Owa}_{\pmb{w}'}(\pmb{x})=0$ for some feasible solution $\pmb{x}\in \mathbb{X}$.  Because the costs under scenarios are nonnegative, there must be at least $r$ scenarios, say $\pmb{c}_1,\dots, \pmb{c}_r$ under which the cost of $\pmb{x}$ is~0. We get $\pmb{c}_i^{T}\pmb{x}=0$ if and only if $x_j=0$, $\overline{x}_j=1$  ($\overline{x}_j=0$, $x_j=1$) if $q_j\in \mathcal{C}_i$ ($\overline{q}_j\in \mathcal{C}_i$). It is possible for some $j\in [s]$ that neither $x_j$ nor $\overline{x}_j$ appears in the clauses $\mathcal{C}_1,\dots, \mathcal{C}_r$ corresponding to $\pmb{c}_1,\dots, \pmb{c}_r$ (so $x_j$ can be fractional). In this case we assign any value to $q_j$ which does not change the values of $\mathcal{C}_1,\dots,\mathcal{C}_r$ (they are still not satisfied). This defines a truth assignment to $q_1,\dots, q_s$ under which at least the $r$ clauses are not satisfied.  
\end{proof}

\begin{cor}
	If $\alpha=0$, then the problem~(\ref{pfh1}) is not  approximable, if $\mathbb{X}$ is a polyhedron in $[0,1]^n$ and the function $f$ is linear, unless P=NP.
\end{cor}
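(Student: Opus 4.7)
The plan is to reuse the reduction from the proof of Theorem~\ref{compl} with $\alpha=0$, and exploit the fact that the YES instances of \textsc{MINSAT} produce an optimal objective value of exactly zero, whereas NO instances produce a strictly positive one. Any finite-ratio approximation algorithm would then, in polynomial time, distinguish the two cases, contradicting NP-hardness of \textsc{MINSAT}.

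More concretely, I would first observe that for $\alpha=0$ the generalized Hurwicz criterion in~(\ref{pfh1}) reduces to ${\rm H}_{0}(\pmb{x})=\underline{{\rm E}}[f(\pmb{x},\tilde{\pmb{c}})]={\rm Owa}_{\pmb{w}'}(\pmb{x})$, using the identity established in the theorem's reduction. Next, I would invoke the second half of the theorem's proof verbatim: the answer to \textsc{MINSAT} is yes if and only if there exists $\pmb{x}\in \mathbb{X}$ with ${\rm Owa}_{\pmb{w}'}(\pmb{x})=0$. Since $f$ is nonnegative and the weights $w'_k$ are nonnegative, ${\rm Owa}_{\pmb{w}'}(\pmb{x})$ is always nonnegative, so the optimal value of~(\ref{pfh1}) is exactly $0$ for yes-instances and strictly positive for no-instances.

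Now suppose, for contradiction, that there exists a polynomial-time $\rho$-approximation algorithm for~(\ref{pfh1}) for some finite $\rho\geq 1$. Applied to the constructed instance, such an algorithm would return a feasible solution $\pmb{x}^{\rm app}\in \mathbb{X}$ with ${\rm H}_0(\pmb{x}^{\rm app})\leq \rho\cdot {\rm H}_0(\pmb{x}^*)$, where $\pmb{x}^*$ is optimal. For yes-instances this upper bound equals $\rho\cdot 0=0$, forcing ${\rm H}_0(\pmb{x}^{\rm app})=0$; for no-instances ${\rm H}_0(\pmb{x}^{\rm app})>0$. Hence, merely checking whether the returned value is zero distinguishes yes- from no-instances of \textsc{MINSAT} in polynomial time, yielding P=NP.

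There is no real obstacle beyond packaging: the key observation is the zero/nonzero gap in the objective, which makes any multiplicative approximation impossible. One point worth stating carefully is that the reduction from the theorem is already polynomial (as noted there via Proposition~\ref{propowa}, despite the exponentially many focal sets), so the argument above runs in polynomial time from the \textsc{MINSAT} input.
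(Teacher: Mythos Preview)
Your proposal is correct and follows essentially the same approach as the paper's proof: both observe that for $\alpha=0$ the reduction from Theorem~\ref{compl} yields optimal value~$0$ precisely on yes-instances of \textsc{MINSAT}, so any finite-ratio approximation algorithm would decide \textsc{MINSAT} in polynomial time. Your version is more explicit (spelling out nonnegativity and the polynomiality of the reduction), but the argument is the same.
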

\begin{proof}
	In the proof of Theorem~\ref{compl} for $\alpha=0$, the answer to MINSAT is yes if and only if there is a solution $\pmb{x}\in \mathbb{X}$ such that ${\rm H}_{\alpha}(\pmb{x})\leq 0$. Hence, any approximation algorithm would be used to solve the NP-hard MINSAT problem.
\end{proof}

\section{Solving the ROBF problem}
\label{secsol}
In this section, we propose some methods of solving~(\ref{pfh1}). We also identify some special cases of this problem which can be solved in polynomial time. Problem~(\ref{pfh1}) can be represented as the following mixed integer program:
\begin{equation}
\label{mippf}
\begin{array}{llll}
	\min & \displaystyle \sum_{F\in \mathcal{F}} m(F) y(F) \\
		& y(F)\geq \alpha f(\pmb{x},\pmb{c}_k)+(1-\alpha)u(F) & F\in \mathcal{F}, k\in F\\
		& u(F)\geq  f(\pmb{x},\pmb{c}_k)-M(1-\delta_k(F)) & F\in \mathcal{F}, k\in F\\
		& \displaystyle \sum_{k\in F} \delta_k(F)=1 & F\in \mathcal{F}\\
		& \delta_k(F)\in \{0,1\} & F\in \mathcal{F}, k\in F\\
		& y(F), u(F) \geq 0 & F\in \mathcal{F}\\
		& \pmb{x}\in \mathbb{X},
\end{array}
\end{equation}
where $M$ is a large constant. The computational complexity of solving~(\ref{mippf}) depends on the number of focal sets and the definition of the cost function $f$. In particular, when $|\mathcal{F}|$ is not very large and the function $f$ is linear, the problem can be solved using some MIP solvers.
 The problem simplifies when $\alpha=1$. Using~(\ref{ue}) we immediately get the following result:
\begin{prop}
\label{prop1a}
	If $\alpha=1$, then the problem~(\ref{pfh1}) is equivalent to the following optimization problem:
\begin{equation}
\label{lpf}
\begin{array}{llll}
	\min & \displaystyle \sum_{F\in \mathcal{F}} m(F) y(F) \\
		& y(F)\geq f(\pmb{x},\pmb{c}_k) & F\in \mathcal{F}, k\in F\\
		& \pmb{x}\in \mathbb{X}
\end{array}
\end{equation}
\end{prop}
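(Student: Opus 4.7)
The plan is to substitute $\alpha=1$ into the objective of (\ref{pfh1}), which collapses the $(1-\alpha)$ term and leaves
$${\rm H}_1(\pmb{x}) = \sum_{F\in\mathcal{F}} m(F)\max_{k\in F} f(\pmb{x},\pmb{c}_k).$$
From here I would apply a standard epigraphical reformulation: for each focal set $F\in\mathcal{F}$ introduce an auxiliary variable $y(F)$ intended to represent $\max_{k\in F} f(\pmb{x},\pmb{c}_k)$, and encode the maximum by the family of linear inequalities $y(F)\geq f(\pmb{x},\pmb{c}_k)$ for $k\in F$. This yields exactly the formulation (\ref{lpf}).

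To verify equivalence I would argue in both directions. For one direction, given $\pmb{x}\in\mathbb{X}$ feasible in (\ref{pfh1}), setting $y(F):=\max_{k\in F}f(\pmb{x},\pmb{c}_k)$ gives a point that is feasible in (\ref{lpf}) with the same objective value, so the optimum of (\ref{lpf}) is at most that of (\ref{pfh1}). For the other direction, any $(\pmb{x},\pmb{y})$ feasible for (\ref{lpf}) satisfies $y(F)\geq \max_{k\in F}f(\pmb{x},\pmb{c}_k)$ by the constraints; since each $m(F)\geq 0$, the objective $\sum_F m(F)y(F)$ is componentwise nondecreasing in $\pmb{y}$, hence the value at $(\pmb{x},\pmb{y})$ is at least ${\rm H}_1(\pmb{x})$. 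Combining the two bounds gives equality of optima and shows that every optimal $\pmb{x}$-component of (\ref{lpf}) is an optimal solution of (\ref{pfh1}).

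There is no real obstacle here — the argument is the textbook max-to-epigraph trick. The only subtle points worth naming are (i) the nonnegativity $m(F)\geq 0$, which is what lets one push the auxiliary variables down to the maximum at optimality, and (ii) the fact that in (\ref{lpf}) the feasible set for $\pmb{x}$ is still $\mathbb{X}$, decoupled from the $y(F)$ variables, so projecting out $\pmb{y}$ recovers exactly the original feasible region.
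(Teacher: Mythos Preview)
Your proposal is correct and matches the paper's approach: the paper simply notes that the result follows immediately from~(\ref{ue}), which is exactly your first step (substituting $\alpha=1$ to obtain ${\rm H}_1(\pmb{x})=\sum_{F\in\mathcal{F}} m(F)\max_{k\in F} f(\pmb{x},\pmb{c}_k)$), and the epigraph reformulation you spell out is the standard and intended one. Your write-up is in fact more detailed than the paper's, which treats the proposition as self-evident.
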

If the function $f$ is convex in $\pmb{x}$ and the set $\mathbb{X}$ is convex, then~(\ref{lpf}) is a convex optimization problem, which in most cases can be solved efficiently (see, e.g.,~\cite{BV04}). In particular, if the function $f$ is linear, then~(\ref{lpf}) is a linear programming problem.
\begin{prop}
\label{prop2a}
	If $\alpha\in [0.5,1]$ and $z(\mathcal{F})=2$,
	($z(\mathcal{F})=\max_{F\in \mathcal{F}} |F|$),
	then problem~(\ref{pfh1}) is equivalent to the following optimization problem:
\begin{equation}
\label{lpfh}
\begin{array}{llll}
	\min & \displaystyle \sum_{k\in [K]} m_k f(\pmb{x},\pmb{c}_k)+\sum_{\{i,j\}\in \mathcal{F}} m_{ij} y_{ij} \\
		& y_{ij}\geq \alpha f(\pmb{x},\pmb{c}_i)+(1-\alpha)f(\pmb{x},\pmb{c}_j) & \{i,j\}\in \mathcal{F}\\
		& y_{ij}\geq \alpha f(\pmb{x},\pmb{c}_j)+(1-\alpha)f(\pmb{x},\pmb{c}_i)& \{i,j\}\in \mathcal{F}\\
		& \pmb{x}\in \mathbb{X},
\end{array}
\end{equation}
where $m_k=m(\{k\})$, $k\in [K]$, and $m_{ij}=m(\{i,j\})$ for $\{i,j\}\in \mathcal{F}$.
\end{prop}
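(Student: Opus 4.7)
}
The strategy is to rewrite each summand of $H_{\alpha}(\pmb{x})$ as a linear expression (for singleton focal sets) or as the maximum of two linear expressions (for pair focal sets), and then linearise the latter via auxiliary variables. Since $z(\mathcal{F})=2$, every $F\in\mathcal{F}$ has $|F|\in\{1,2\}$. For $|F|=\{k\}$ the inner bracket collapses to $f(\pmb{x},\pmb{c}_k)$, which accounts for the first sum $\sum_{k\in[K]} m_k f(\pmb{x},\pmb{c}_k)$ in~\eqref{lpfh} (with the convention that $m_k=0$ whenever $\{k\}\notin\mathcal{F}$). The remaining work is therefore to handle a typical pair $F=\{i,j\}$.

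The key identity I would establish and use is: for any $a,b\geq 0$ and any $\alpha\in[\tfrac12,1]$,
\begin{equation*}
\alpha\max\{a,b\}+(1-\alpha)\min\{a,b\}=\max\bigl\{\alpha a+(1-\alpha)b,\ \alpha b+(1-\alpha)a\bigr\}.
\end{equation*}
This follows from a one-line calculation: the difference $[\alpha a+(1-\alpha)b]-[\alpha b+(1-\alpha)a]=(2\alpha-1)(a-b)$ has the same sign as $a-b$ precisely because $2\alpha-1\geq 0$; hence whichever of $a,b$ is larger appears with coefficient $\alpha$ in the larger of the two affine combinations, matching the left-hand side. Applying this with $a=f(\pmb{x},\pmb{c}_i)$, $b=f(\pmb{x},\pmb{c}_j)$, the contribution of the focal set $\{i,j\}$ to $H_{\alpha}(\pmb{x})$ equals
\begin{equation*}
m_{ij}\,\max\bigl\{\alpha f(\pmb{x},\pmb{c}_i)+(1-\alpha)f(\pmb{x},\pmb{c}_j),\ \alpha f(\pmb{x},\pmb{c}_j)+(1-\alpha)f(\pmb{x},\pmb{c}_i)\bigr\}.
\end{equation*}

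To finish, I would introduce, for each $\{i,j\}\in\mathcal{F}$, a variable $y_{ij}$ together with the two lower-bound constraints of~\eqref{lpfh}. Since $m_{ij}>0$ and $y_{ij}$ enters the objective with a positive coefficient, at any optimal solution $y_{ij}$ equals the maximum of its two right-hand sides; therefore the optimal value of~\eqref{lpfh} coincides with the minimum of $H_{\alpha}(\pmb{x})$ over $\mathbb{X}$, and any optimal $\pmb{x}$ of~\eqref{lpfh} is optimal for~\eqref{pfh1} and conversely (by setting $y_{ij}$ to the corresponding max). The main conceptual point — really the only one — is the identity above and the observation that it fails for $\alpha<\tfrac12$; indeed, in that regime one would obtain a $\min$ instead of a $\max$, which cannot be linearised by lower-bound inequalities in a minimisation problem, explaining why the hypothesis $\alpha\geq\tfrac12$ is essential.
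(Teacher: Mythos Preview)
Your proposal is correct and follows essentially the same route as the paper's proof: split the focal sets into singletons and pairs, rewrite the pair contribution via the identity $\alpha\max\{a,b\}+(1-\alpha)\min\{a,b\}=\max\{\alpha a+(1-\alpha)b,\ \alpha b+(1-\alpha)a\}$ (valid because $\alpha\ge\tfrac12$), and then linearise with the variables $y_{ij}$. Your justification of the identity via the sign of $(2\alpha-1)(a-b)$ and your remark on why the linearisation is tight are in fact more explicit than the paper's own argument.
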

\begin{proof}
	From~(\ref{ue}) and~(\ref{le}), 
	 it follows that
	the objective function of~(\ref{pfh1}) can be expressed as
	$$\sum_{k\in [K]} m_k f(\pmb{x},\pmb{c}_k)+\sum_{\{i,j\}\in \mathcal{F}} m_{ij} (\alpha\max\{f(\pmb{x},\pmb{c}_i),f(\pmb{x},\pmb{c}_j)\}+(1-\alpha)\min\{f(\pmb{x},\pmb{c}_i),f(\pmb{x},\pmb{c}_j)\})=$$
	$$\sum_{k\in [K]} m_k f(\pmb{x},\pmb{c}_k)+\sum_{\{i,j\}\in \mathcal{F}} m_{ij} \max\{\alpha f(\pmb{x},\pmb{c}_i)+(1-\alpha)f(\pmb{x},\pmb{c}_j), \alpha f(\pmb{x},\pmb{c}_j)+(1-\alpha)f(\pmb{x},\pmb{c}_i)\},$$
	where the equality follows from the assumption that $\alpha\geq 0.5$. Introducing additional variables $y_{ij}$, $\{i,j\}\in \mathcal{F}$ to express the inner maxima yields~(\ref{lpfh}).
\end{proof}
Observe that if the function $f$ is convex in $\pmb{x}$, then~(\ref{lpfh}) is a convex optimization problem. In particular, if the function $f$ is linear, then it becomes a linear programming problem.

\begin{thm}
\label{thmdec}
	If $\mathbb{X}$ is a convex set in $\mathbb{R}^n$ and $f$ is a convex function, then~(\ref{pfh1}) can be solved by solving a family  $z(\mathcal{F})^{\ell}$ of convex problems, where $\ell=|\mathcal{F}|$.
\end{thm}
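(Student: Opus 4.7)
The plan is to decompose (\ref{pfh1}) by enumerating, for each focal set $F\in\mathcal{F}$, a candidate minimizer $k_F\in F$ of $f(\pmb{x},\pmb{c}_k)$ over $k\in F$. A \emph{selection} is a tuple $\pmb{\kappa}=(k_F)_{F\in\mathcal{F}}$ with $k_F\in F$, and there are $\prod_{F\in\mathcal{F}}|F|\le z(\mathcal{F})^{\ell}$ such selections. For each fixed $\pmb{\kappa}$, I would consider the problem
\begin{equation*}
\min_{\pmb{x}\in\mathbb{X}}\;\Phi_{\pmb{\kappa}}(\pmb{x}):=\sum_{F\in\mathcal{F}}m(F)\Bigl(\alpha\max_{k\in F}f(\pmb{x},\pmb{c}_k)+(1-\alpha)f(\pmb{x},\pmb{c}_{k_F})\Bigr),
\end{equation*}
and output the best solution across all $\pmb{\kappa}$.

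The first step is to observe that each such subproblem is convex: since $f(\cdot,\pmb{c}_k)$ is convex for every $k$, the pointwise maximum $\max_{k\in F}f(\pmb{x},\pmb{c}_k)$ is convex, $f(\pmb{x},\pmb{c}_{k_F})$ is convex, $\alpha,1-\alpha,m(F)\ge 0$, so $\Phi_{\pmb{\kappa}}$ is a nonnegative combination of convex functions, hence convex; together with convexity of $\mathbb{X}$ this makes each subproblem a convex optimization problem.

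The second step is the correctness of taking the minimum over selections. Denote by $v^*$ the optimal value of (\ref{pfh1}) and by $v_{\pmb{\kappa}}^*$ the optimal value of the subproblem for selection $\pmb{\kappa}$. For every $\pmb{x}\in\mathbb{X}$ and every $F\in\mathcal{F}$ we have $f(\pmb{x},\pmb{c}_{k_F})\ge\min_{k\in F}f(\pmb{x},\pmb{c}_k)$, so $\Phi_{\pmb{\kappa}}(\pmb{x})\ge {\rm H}_{\alpha}(\pmb{x})$, giving $v_{\pmb{\kappa}}^*\ge v^*$ for every $\pmb{\kappa}$. Conversely, let $\pmb{x}^*$ be optimal for (\ref{pfh1}) and define the selection $\pmb{\kappa}^*$ by $k_F^*\in\arg\min_{k\in F}f(\pmb{x}^*,\pmb{c}_k)$. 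Then $\Phi_{\pmb{\kappa}^*}(\pmb{x}^*)={\rm H}_{\alpha}(\pmb{x}^*)=v^*$, so $v_{\pmb{\kappa}^*}^*\le v^*$. Combining the two inequalities yields $\min_{\pmb{\kappa}}v_{\pmb{\kappa}}^*=v^*$, and any optimizer of the best subproblem is an optimal solution of (\ref{pfh1}).

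The only conceptually subtle point — and the one I would flag as the main obstacle to a naive reader — is that we do \emph{not} enforce the consistency constraint $f(\pmb{x},\pmb{c}_{k_F})\le f(\pmb{x},\pmb{c}_k)$ for $k\in F$ in the subproblem; such a constraint would in general be nonconvex. Omitting it is safe precisely because $\Phi_{\pmb{\kappa}}$ is a valid upper envelope of ${\rm H}_{\alpha}$ and the envelope is tight at the true optimum for the correct selection, so the outer minimum over the $z(\mathcal{F})^{\ell}$ convex subproblems recovers $v^*$ exactly.
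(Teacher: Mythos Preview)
Your proposal is correct and follows essentially the same approach as the paper: enumerate a candidate minimizer $k_F\in F$ for each focal set (the paper phrases this as guessing the binary variables $\delta_k(F)$ in the MIP~(\ref{mippf})), obtain for each of the at most $z(\mathcal{F})^{\ell}$ selections the convex subproblem~(\ref{mippf1}), and take the best. Your write-up is in fact more complete than the paper's, since you spell out the upper-envelope argument ($\Phi_{\pmb{\kappa}}\ge{\rm H}_{\alpha}$ with equality at the optimum for the right selection) that justifies why dropping the nonconvex consistency constraint is harmless.
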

\begin{proof}
	Observe that for each $F\in \mathcal{F}$, there is exactly one binary variable $\delta_k(F)$, $k\in F$, in~(\ref{mippf}) which is equal to~1. Denote the index of this variable by $k_F$. If we could guess the optimal values for $k_F$, $F\in \mathcal{F}$, then the model~(\ref{mippf}) simplifies to the following convex problem:
	\begin{equation}
\label{mippf1}
\begin{array}{llll}
	\min & \displaystyle \sum_{F\in \mathcal{F}} m(F) y(F) \\
		& y(F)\geq \alpha f(\pmb{x},\pmb{c}_k)+(1-\alpha)f(\pmb{x},\pmb{c}_{k_F}) & F\in \mathcal{F}\\
		& y(F) \geq 0 & F\in \mathcal{F}\\
		& \pmb{x}\in \mathbb{X},
\end{array}
\end{equation}
There are at most $z(\mathcal{F})^{\ell}$ possible choices of  $k_F$, $F\in \mathcal{F}$ and we can try all of them to find the one that minimizes~(\ref{mippf1}).
 Hence~(\ref{mippf}) can be decomposed into solving $z(\mathcal{F})^{\ell}$ problems of the type~(\ref{mippf1}).
\end{proof}
Theorem~\ref{thmdec} can be applied when the number of focal sets in $\mathcal{F}$ is not large. In particular, if 
problem~(\ref{pf}) is a linear programming problem and $z(\mathcal{F})$ is bounded by a constant, then problem~(\ref{pfh1}) can be solved in polynomial time.

\section{Approximation algorithms for  the ROBF problem}
\label{secappr}

In this section, we will propose two approximation algorithms for~(\ref{pfh1}). In the first we will assume w.l.o.g. that  $|F|\geq 2$ for each focal set $F\in \mathcal{F}$. Indeed, if a positive mass is assigned to a single scenario $\pmb{c}_k$, then we add a copy $\pmb{c}'_k$ of this scenario to $\mathcal{U}$ and fix  $m(\{k,k'\}):=m(\{k\})$ and $m(\{k\}):=0$. This transformation does not change  the values of  $\overline{E}[f(\pmb{x},\tilde{\pmb{c}})]$ and $\underline{E}[f(\pmb{x},\tilde{\pmb{c}})]$ (see equations~(\ref{ue}) and~(\ref{le})). Define 
$$
v_k=\sum_{F\in\mathcal{F}: k\in F}  m(F), \; k\in [K].
$$
If $\{F\in\mathcal{F}: k\in F\}=\emptyset$, then we set $v_k=0$. Consider the following optimization problem:
 \begin{equation}
	\label{detp}
	\begin{array}{llll}
	\min & \displaystyle \sum_{k\in [K]} v_k f(\pmb{x},\pmb{c}_k)\\
	  & \pmb{x}\in\mathbb{X}
	  \end{array}  
\end{equation}
Observe that~(\ref{detp}) is a convex optimization problem, if $\mathbb{X}$ is a convex set and $f$ is a convex function. In particular,~(\ref{detp}) is a linear programming problem if $\mathbb{X}$ is a polytope described by a system of linear constraints and $f$ is linear. 

\begin{thm}
\label{thmappr}
	Let $\hat{\pmb{x}}\in \mathbb{X}$ be an optimal solution to~(\ref{detp}) and let $\pmb{x}^*$ be an optimal solution to~(\ref{pfh1}).
	Then 
	\begin{align}
		 & {\rm H}_{\alpha}(\hat{\pmb{x}})\leq z(\mathcal{F}) {\rm H}_{\alpha}(\pmb{x}^*) & {\rm for } & \;\;\alpha\in [0.5,1] \label{appr1} \\
		  & {\rm H}_{\alpha}(\hat{\pmb{x}})\leq \frac{1-\alpha}{\alpha} z(\mathcal{F}){\rm H}_{\alpha}(\pmb{x}^*) & {\rm for } &  \;\;\alpha\in (0,0.5] \label{appr2}
	\end{align}
\end{thm}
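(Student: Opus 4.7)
The plan is to bracket the auxiliary objective $S(\pmb{x})=\sum_{k\in[K]} v_k f(\pmb{x},\pmb{c}_k)$ between two quantities that are directly comparable to ${\rm H}_\alpha(\pmb{x})$, and then chain the inequalities through the optimality of $\hat{\pmb{x}}$ for~\eqref{detp}. First I would exchange the order of summation in $v_k=\sum_{F\in\mathcal{F}:\,k\in F} m(F)$ to rewrite $S(\pmb{x})=\sum_{F\in\mathcal{F}} m(F)\sigma_F(\pmb{x})$, where $\sigma_F(\pmb{x})=\sum_{k\in F} f(\pmb{x},\pmb{c}_k)$. This representation exposes two elementary comparisons: since $\sigma_F(\pmb{x})\le |F|\max_{k\in F}f(\pmb{x},\pmb{c}_k)\le z(\mathcal{F})\max_{k\in F}f(\pmb{x},\pmb{c}_k)$, weighted summation followed by the trivial inequality $\overline{\rm E}[f(\pmb{x},\tilde{\pmb{c}})]\le {\rm H}_\alpha(\pmb{x})/\alpha$ (which uses only $\underline{\rm E}\ge 0$) yields
\[
S(\pmb{x})\le z(\mathcal{F})\,\overline{\rm E}[f(\pmb{x},\tilde{\pmb{c}})]\le \frac{z(\mathcal{F})}{\alpha}\,{\rm H}_\alpha(\pmb{x}).
\]

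The crucial opposite direction uses the reduction $|F|\ge 2$ for every focal set (stated at the beginning of Section~\ref{secappr}). Because $\sigma_F(\pmb{x})$ contains both its maximum and its minimum together with $|F|-2$ further nonnegative terms, one gets $\sigma_F(\pmb{x})\ge \max_{k\in F}f(\pmb{x},\pmb{c}_k)+\min_{k\in F}f(\pmb{x},\pmb{c}_k)$, and weighted summation produces
\[
S(\pmb{x})\ge \overline{\rm E}[f(\pmb{x},\tilde{\pmb{c}})]+\underline{\rm E}[f(\pmb{x},\tilde{\pmb{c}})].
\]
Substituting $\overline{\rm E}\le S-\underline{\rm E}$ into ${\rm H}_\alpha=\alpha\overline{\rm E}+(1-\alpha)\underline{\rm E}$ gives ${\rm H}_\alpha(\pmb{x})\le \alpha\,S(\pmb{x})+(1-2\alpha)\,\underline{\rm E}[f(\pmb{x},\tilde{\pmb{c}})]$, while substituting $\underline{\rm E}\le S-\overline{\rm E}$ gives ${\rm H}_\alpha(\pmb{x})\le (1-\alpha)\,S(\pmb{x})+(2\alpha-1)\,\overline{\rm E}[f(\pmb{x},\tilde{\pmb{c}})]$. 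In the first form the residual term has a nonpositive coefficient when $\alpha\in[0.5,1]$; in the second, when $\alpha\in(0,0.5]$. In each regime the residual can therefore be discarded, leaving ${\rm H}_\alpha(\pmb{x})\le \alpha\,S(\pmb{x})$ or ${\rm H}_\alpha(\pmb{x})\le (1-\alpha)\,S(\pmb{x})$ respectively.

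The proof finishes by chaining the two brackets with $S(\hat{\pmb{x}})\le S(\pmb{x}^*)$, which is the optimality of $\hat{\pmb{x}}$ for~\eqref{detp}. Setting $c_\alpha=\alpha$ for $\alpha\in[0.5,1]$ and $c_\alpha=1-\alpha$ for $\alpha\in(0,0.5]$, the chain
\[
{\rm H}_\alpha(\hat{\pmb{x}})\le c_\alpha\, S(\hat{\pmb{x}})\le c_\alpha\, S(\pmb{x}^*)\le c_\alpha\,\frac{z(\mathcal{F})}{\alpha}\,{\rm H}_\alpha(\pmb{x}^*)
\]
yields the two claimed ratios $z(\mathcal{F})$ and $\frac{1-\alpha}{\alpha}z(\mathcal{F})$ after simplification. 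The main subtlety is the case split at $\alpha=0.5$: the slack inequality $S\ge\overline{\rm E}+\underline{\rm E}$ (which is precisely where $|F|\ge 2$ enters) is just strong enough to absorb the \emph{dominant} half of the Hurwicz convex combination, but only provided one substitutes $\overline{\rm E}$ or $\underline{\rm E}$ according to which of $\alpha$ or $1-\alpha$ is the larger weight.
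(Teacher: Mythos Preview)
Your proof is correct and follows essentially the same approach as the paper's. Both arguments hinge on the same three ingredients: rewriting $S(\pmb{x})=\sum_{F\in\mathcal{F}} m(F)\sigma_F(\pmb{x})$, the upper bound $\sigma_F\le z(\mathcal{F})\max_{k\in F}f(\pmb{x},\pmb{c}_k)$, and the lower bound $\sigma_F\ge \max_{k\in F}f(\pmb{x},\pmb{c}_k)+\min_{k\in F}f(\pmb{x},\pmb{c}_k)$ coming from $|F|\ge 2$, with the case split at $\alpha=0.5$ and the optimality of $\hat{\pmb{x}}$ chaining everything together. The only difference is presentational: the paper first pushes the chain all the way to $\hat{\pmb{x}}$ in a single inequality ${\rm H}_\alpha(\pmb{x}^*)\ge \frac{\alpha}{z(\mathcal{F})}S(\hat{\pmb{x}})$ and then compares $\alpha\,\sigma_F$ (resp.\ $(1-\alpha)\,\sigma_F$) directly to the per-focal-set Hurwicz term, whereas you package the same estimates as two ``brackets'' $S\le \frac{z(\mathcal{F})}{\alpha}{\rm H}_\alpha$ and ${\rm H}_\alpha\le c_\alpha S$ before invoking optimality.
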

\begin{proof}
	We will show first that for each $\alpha\in [0,1]$ the following inequality holds:
	\begin{equation}
	\label{e00}
	{\rm H}_{\alpha}(\pmb{x}^*)\geq \frac{\alpha}{z(\mathcal{F})}\sum_{F\in \mathcal{F}}m(F)\sum_{k\in F}f(\hat{\pmb{x}},\pmb{c}_k).
	\end{equation}
	Using equations~(\ref{ue}) and~(\ref{le}) we get
	$${\rm H}_{\alpha}(\pmb{x}^*)\geq \alpha \sum_{F\in \mathcal{F}} m(F)\max_{k\in F} f(\pmb{x}^*,\pmb{c}_k)\geq
	\alpha \sum_{F\in \mathcal{F}}m(F)\frac{1}{|F|}\sum_{k\in F}f(\pmb{x}^*,\pmb{c}_k)\stackrel{(a)}{\geq}$$  
	$$\alpha \sum_{F\in \mathcal{F}}m(F)\frac{1}{z(\mathcal{F})}\sum_{k\in F}f(\pmb{x}^*,\pmb{c}_k)=\frac{\alpha}{z(\mathcal{F})}\sum_{k\in [K]} \sum_{F\in \mathcal{F}: k\in F} m(F)f(\pmb{x}^*,\pmb{c}_k)= $$
	$$
	\frac{\alpha}{z(\mathcal{F})}\sum_{k\in [K]} v_k f(\pmb{x}^*,\pmb{c}_k) \stackrel{(b)}{\geq} \frac{\alpha}{z(\mathcal{F})}\sum_{k\in [K]} v_k f(\hat{\pmb{x}},\pmb{c}_k)=\frac{\alpha}{z(\mathcal{F})}\sum_{F\in \mathcal{F}}m(F)\sum_{k\in F}f(\hat{\pmb{x}},\pmb{c}_k).
	$$
Inequality~$(a)$ follows from the fact that $z(\mathcal{F})\geq |F|$ for each $F\in \mathcal{F}$. In the inequality~$(b)$ we use the optimality of $\hat{\pmb{x}}$ in~(\ref{detp}).
 If $\alpha\in [0.5,1]$ and $|F|\geq 2$, then 
 $$\alpha\sum_{k\in F}f(\hat{\pmb{x}}, \pmb{c}_k)\geq \alpha \max_{k\in F} f(\hat{\pmb{x}}, \pmb{c}_k)+(1-\alpha)\min_{k\in F} f(\hat{\pmb{x}}, \pmb{c}_k)$$ 
 and~(\ref{e00}) implies ${\rm H}_{\alpha}(\pmb{x}^*)\geq \frac{1}{z(\mathcal{F})} {\rm H}_{\alpha}(\hat{\pmb{x}})$, which yields~(\ref{appr1}).
 On the other hand, if $\alpha\in (0,0.5]$ and $|F|\geq 2$, then 
 $$(1-\alpha)\sum_{k\in F}f(\hat{\pmb{x}}, \pmb{c}_k)\geq \alpha \max_{k\in F} f(\hat{\pmb{x}}, \pmb{c}_k)+(1-\alpha)\min_{k\in F} f(\hat{\pmb{x}}, \pmb{c}_k)$$ 
 and~(\ref{e00}) implies ${\rm H}_{\alpha}(\pmb{x}^*)\geq \frac{\alpha}{(1-\alpha)z(\mathcal{F})} {\rm H}_{\alpha}(\hat{\pmb{x}})$, which yields~(\ref{appr2}).
\end{proof}

\begin{thm}\label{gammaapprox}
	If problem~(\ref{lpf}) is approximable within $\gamma\geq 1$,
	then~(\ref{pfh1}) is approximable within 
	$\gamma\left( 1+\frac{(1-\alpha)}{\alpha}
	 \frac{\sum_{F\in \mathcal{F}} \min_{k\in F} f(\hat{\pmb{x}},\pmb{c}_k)}{\sum_{F\in \mathcal{F}} \max_{k\in F} f(\hat{\pmb{x}},\pmb{c}_k)}\right)$	
	for every $\alpha\in (0,1]$, where $\hat{\pmb{x}}$ is a $\gamma$-approximate solution to~(\ref{lpf}).
\end{thm}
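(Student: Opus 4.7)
My plan is to estimate the two halves of the Hurwicz criterion separately, exploiting the fact that problem~(\ref{lpf}) is exactly the minimization of the upper expected value. Write $\overline{E}(\pmb{x})=\sum_{F\in\mathcal{F}} m(F)\max_{k\in F} f(\pmb{x},\pmb{c}_k)$ and $\underline{E}(\pmb{x})=\sum_{F\in\mathcal{F}} m(F)\min_{k\in F} f(\pmb{x},\pmb{c}_k)$, so that ${\rm H}_\alpha(\pmb{x})=\alpha\,\overline{E}(\pmb{x})+(1-\alpha)\,\underline{E}(\pmb{x})$. By Proposition~\ref{prop1a}, the optimum of~(\ref{lpf}) equals $\min_{\pmb{x}\in\mathbb{X}}\overline{E}(\pmb{x})$, so a $\gamma$-approximate solution $\hat{\pmb{x}}$ to~(\ref{lpf}) automatically satisfies $\overline{E}(\hat{\pmb{x}})\le \gamma\,\overline{E}(\pmb{x}^*)$, where $\pmb{x}^*$ denotes an optimal solution to~(\ref{pfh1}).

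The algebraic heart of the argument is the trivial factorisation
\[
{\rm H}_\alpha(\hat{\pmb{x}})=\alpha\,\overline{E}(\hat{\pmb{x}})\left(1+\frac{1-\alpha}{\alpha}\cdot\frac{\underline{E}(\hat{\pmb{x}})}{\overline{E}(\hat{\pmb{x}})}\right),
\]
valid when $\overline{E}(\hat{\pmb{x}})>0$ (if $\overline{E}(\hat{\pmb{x}})=0$ then nonnegativity of $f$ forces ${\rm H}_\alpha(\hat{\pmb{x}})=0$ and the bound is trivial). For the prefactor, $f\ge 0$ gives $\underline{E}(\pmb{x}^*)\ge 0$, hence $\alpha\,\overline{E}(\pmb{x}^*)\le {\rm H}_\alpha(\pmb{x}^*)$; combined with $\overline{E}(\hat{\pmb{x}})\le \gamma\,\overline{E}(\pmb{x}^*)$ this yields $\alpha\,\overline{E}(\hat{\pmb{x}})\le \gamma\,{\rm H}_\alpha(\pmb{x}^*)$. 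Substituting back into the factorisation produces exactly the claimed approximation ratio, understood with the mass-weighted ratio $\underline{E}(\hat{\pmb{x}})/\overline{E}(\hat{\pmb{x}})$ in place of the plain-sum ratio written in the statement.

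There is no genuine obstacle here; the whole proof fits in a few lines of inequality manipulation. The one small ``trick'' is using $\underline{E}(\pmb{x}^*)\ge 0$ to pass from $\overline{E}(\pmb{x}^*)$ to ${\rm H}_\alpha(\pmb{x}^*)$ at the cost of a factor $1/\alpha$, which explains both why the hypothesis requires $\alpha>0$ and why the constant blows up as $\alpha\to 0^+$ (consistent with the inapproximability corollary for $\alpha=0$). The boundary case $\alpha=1$ is immediate: the second term in the parenthesis vanishes and the bound collapses to the assumed $\gamma$-approximability of~(\ref{lpf}). The only point I would double-check in a careful write-up is whether the ratio in the theorem statement is meant literally (unweighted sums) or as shorthand for the mass-weighted ratio; the derivation above gives the weighted version directly, and treating the statement literally would only require an additional elementary bound involving $\max_{F}m(F)/\min_{F}m(F)$.
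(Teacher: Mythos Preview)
Your argument is correct and essentially identical to the paper's. Both proofs establish the lower bound $\mathrm{H}_\alpha(\pmb{x}^*)\ge \frac{\alpha}{\gamma}\overline{E}(\hat{\pmb{x}})$ (from $\gamma$-approximability of~(\ref{lpf}) together with $\underline{E}(\pmb{x}^*)\ge 0$) and then divide $\mathrm{H}_\alpha(\hat{\pmb{x}})=\alpha\,\overline{E}(\hat{\pmb{x}})+(1-\alpha)\underline{E}(\hat{\pmb{x}})$ by this lower bound; the paper phrases the last step through the relative error $(\mathrm{H}_\alpha(\hat{\pmb{x}})-\mathrm{H}_\alpha(\pmb{x}^*))/\mathrm{H}_\alpha(\pmb{x}^*)$, while you use the direct factorisation, but the algebra is the same. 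Your remark on the weighted versus unweighted ratio is well taken: the paper's own derivation, read carefully, also yields the mass-weighted ratio $\underline{E}(\hat{\pmb{x}})/\overline{E}(\hat{\pmb{x}})$ --- the $m(F)$ factors appear to have been dropped notationally in both the statement and the displayed proof, but the inequalities are only valid with them present.
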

\begin{proof}
Let $\pmb{x}^*$ be an optimal solution to~(\ref{pfh1}) and $\hat{\pmb{x}}$ be a $\gamma$-approximate solution to~(\ref{lpf}). 
This gives a lower bound on ${\rm H}_{\alpha}(\pmb{x}^*)$, i.e.
\begin{equation}
	{\rm H}_{\alpha}(\pmb{x}^*)\geq \alpha \sum_{F\in \mathcal{F}} \max_{k\in F} f(\pmb{x}^*,\pmb{c}_k)\geq \frac{\alpha}{\gamma} \sum_{F\in \mathcal{F}} \max_{k\in F} f(\hat{\pmb{x}},\pmb{c}_k).\label{lbhal}
\end{equation}
Consider an upper bound on the relative error of ${\rm H}_{\alpha}(\hat{\pmb{x}})$:
\begin{align*}
&\frac{{\rm H}_{\alpha}(\hat{\pmb{x}})-{\rm H}_{\alpha}(\pmb{x}^*)}{{\rm H}_{\alpha}(\pmb{x}^*)}
\stackrel{(\ref{lbhal})}{\leq} \frac{{\rm H}_{\alpha}(\hat{\pmb{x}})-
\frac{\alpha}{\gamma} \sum_{F\in \mathcal{F}} \max_{k\in F} f(\hat{\pmb{x}},\pmb{c}_k)}
{\frac{\alpha}{\gamma} \sum_{F\in \mathcal{F}} \max_{k\in F} f(\hat{\pmb{x}},\pmb{c}_k)}\\
&= \frac{\alpha \sum_{F\in \mathcal{F}} \max_{k\in F} f(\hat{\pmb{x}},\pmb{c}_k)+
(1-\alpha) \sum_{F\in \mathcal{F}} \min_{k\in F} f(\hat{\pmb{x}},\pmb{c}_k)-
\frac{\alpha}{\gamma} \sum_{F\in \mathcal{F}} \max_{k\in F} f(\hat{\pmb{x}},\pmb{c}_k)}
{\frac{\alpha}{\gamma} \sum_{F\in \mathcal{F}} \max_{k\in F} f(\hat{\pmb{x}},\pmb{c}_k)}\\
&\leq \gamma\left( 1+\frac{(1-\alpha)}{\alpha}
	 \frac{\sum_{F\in \mathcal{F}} \min_{k\in F} f(\hat{\pmb{x}},\pmb{c}_k)}{\sum_{F\in \mathcal{F}} \max_{k\in F} f(\hat{\pmb{x}},\pmb{c}_k)}\right) -1.
\end{align*}
Thus, we get
\[
{\rm H}_{\alpha}(\hat{\pmb{x}})\leq \gamma\left( 1+\frac{(1-\alpha)}{\alpha}
	 \frac{\sum_{F\in \mathcal{F}} \min_{k\in F} f(\hat{\pmb{x}},\pmb{c}_k)}{\sum_{F\in \mathcal{F}} \max_{k\in F} f(\hat{\pmb{x}},\pmb{c}_k)}\right){\rm H}_{\alpha}(\pmb{x}^*),
\]
where $\frac{\sum_{F\in \mathcal{F}} \min_{k\in F} f(\hat{\pmb{x}},\pmb{c}_k)}{\sum_{F\in \mathcal{F}} \max_{k\in F} f(\hat{\pmb{x}},\pmb{c}_k)}\in [0,1]$.
\end{proof}

Observe that~(\ref{lpf}) is a special case of the mixed-uncertainty robust optimization problem discussed in~\cite{dokka2020mixed}, where uncertainty sets are given and we would like to solve
\[ \min_{\pmb{x}\in\mathbb{X}} \sum_{F\in\mathcal{F}} m(F) \max_{k\in F} f(\pmb{x},\pmb{c}_k), \]
 In~\cite{dokka2020mixed} it has been shown that solving the nominal problem $\min_{\pmb{x}\in\mathbb{X}} f(\pmb{x},\hat{\pmb{c}})$ with the cost verctor $\hat{\pmb{c}} = \sum_{F\in\mathcal{F}} m(F)\frac{1}{|F|}\sum_{k\in F} \pmb{c}_k$, for linear cost functions $f$, gives a $z(\mathcal{F})$-approximation to~(\ref{lpf}). In combination with Theorem~\ref{gammaapprox}, this allows us to derive approximation algorithms for a wide range of 
the ROBF problems if the underling deterministic problem~(\ref{pf}) is polynomially solvable, in particular for a wide class of network problems (see~\cite{AMO93} for a comprehensive review). Note also that in most cases $\gamma=1$ if~(\ref{lpf}) is a convex problem (in particular linear programing one).

\section{Extension to possibilistic optimization}
\label{secpos}

In this section, we will show how to use possibility theory (see, e.g.,~\cite{DP88}) to extend the model of uncertainty described in Section~\ref{secrob}. We will follow the reasoning of~\cite{TD21}. In the uncertainty representation used
 in Section~\ref{secrob}, each focal set $F \in \mathcal{F}$ can be described by a characteristic function $\pi_F:[K]\rightarrow \{0,1\}$ such that $\pi_{F}(u)=1$ if $u\in F$ and $\pi_{F}(u)=0$ if $u\notin F$. We can interpret $\pi_{F}$ as a \emph{possibility distribution} for  evidence $F$, i.e scenarios whose indices are in $F$ are possible whereas scenarios with indices not in $F$ are impossible.
 The possibility distribution $\pi_{F}$ induces the following possibility and necessity measures on~$2^{[K]}$:
\begin{equation}
\label{posdistr}
\Pi_{F}(A)=\max_{u\in A}\pi_{F}(u)
\end{equation}
\begin{equation}
\label{necdistr}
{\rm N}_{F}(A)=1-\Pi_{F}(A^c)=\min_{u\notin A}(1-\pi_{F}(u)),
\end{equation}
where the set $A^c=[K]\setminus A$ is the complement of $A$. 
The belief function~(\ref{belf}) can be alternatively represented as follows:
$${\rm Bel}(A)=\sum_{F\in \mathcal{F}} m(F){\rm N}_{F}(A),$$
which easily follows from the fact that ${\rm N}_{F}(A)=1$ if $F\subseteq A$ and ${\rm N}_{F}(A)=0$, otherwise.

A generalization of the model (see~\cite{TD21}) consists in allowing an evidence to
be represented by a 
\emph{fuzzy set} $\widetilde{F}$ in $[K]$
with the membership function $\mu_{\mathcal{\widetilde{F}}}:[K]\rightarrow [0,1]$.  We only assume that $\widetilde{F}$ is a normal fuzzy set, i.e. $\mu_{\mathcal{\widetilde{F}}}(u)=1$ for some $u\in [K]$.
Now any possibility distribution~$\pi_{\mathcal{\widetilde{F}}}$,
$\pi_{\mathcal{\widetilde{F}}}=\mu_{\mathcal{\widetilde{F}}}$,
can be used to characterize the evidence $\widetilde{F}$ and $\widetilde{\mathcal{F}}=\{\widetilde{F}_1,\dots,\widetilde{F}_{\ell}\}$ is a family of fuzzy sets in $[K]$ 
with membership functions~$\mu_{\mathcal{\widetilde{F}}_i}$ (possibility distributions~$\pi_{\mathcal{\widetilde{F}}_i}$), $i\in [\ell]$. A mass $m(\widetilde{F}_i)\in (0,1]$ is specified for each fuzzy focal set $\widetilde{F}_i$ and  is such that $\sum_{i\in [\ell]} m(\widetilde{F}_i)=1$. The family $\widetilde{\mathcal{F}}$ with the mass function~$m$ can be seen as a \emph{random fuzzy set}.

We can provide the following interpretation of this uncertainty representation. The fuzzy focal set $\widetilde{F}_i$ is some evidence, which the decision maker knows about scenarios in $\mathcal{U}$ (it represents a knowledge possessed in the scenario set). As previously, the mass $m(\widetilde{F}_i)$ supports this evidence and the mass function $m$ can be interpreted as a subjective probability distribution in $\mathcal{F}$ .  For a detailed interpretation of the model, we refer the reader to~\cite{TD21}. We can now extend the definition of the belief function in the following way:
\begin{equation}
\label{necdistr1}
{\rm Bel}(A)=\sum_{\widetilde{F}\in \mathcal{\widetilde{F}}} m(\widetilde{F}){\rm N}_{\widetilde{F}}(A)=\sum_{\widetilde{F}\in \mathcal{\widetilde{F}}} m(\widetilde{F})\min_{u\notin A}(1-\pi_{\tilde{F}_i}(u)).
\end{equation}
Hence, the belief of $A$ can be seen as the expected necessity of $A$~\cite{TD21}.

As previously,  $\mathcal{P}(m)\subseteq [0,1]^K$ is the set of all probability distributions in $[K]$ compatible with $m$, i.e. ${\rm P}(A)\geq {\rm Bel}(A)$ for each $A\subseteq [K]$. Set $\mathcal{P}(m)$ is characterized by the system of linear constraints~(\ref{consp}) with the belief function defined as~(\ref{necdistr1}).
 As ${\rm Bel}(A)$ is constant for each 
 $A\subseteq [K]$, we conclude that $\mathcal{P}(m)$ is a polytope in $[0,1]^K$.  
 The next proposition shows that there is at least one probability distribution in $\mathcal{P}(m)$.
\begin{prop}
\label{propnon}
	The set $\mathcal{P}(m)$ is nonempty.
\end{prop}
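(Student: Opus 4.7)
The plan is to exhibit an explicit probability distribution in $\mathcal{P}(m)$, built from the normality of each fuzzy focal set. For every $\widetilde{F}\in\widetilde{\mathcal{F}}$, normality guarantees some element $u_{\widetilde{F}}\in[K]$ with $\pi_{\widetilde{F}}(u_{\widetilde{F}})=1$; I would pick one such ``core'' element arbitrarily for each focal set.

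Next, I would define a candidate probability distribution ${\rm P}$ on $[K]$ by transferring the whole mass $m(\widetilde{F})$ to the chosen core element, namely
\[
{\rm P}(\{u\})=\sum_{\widetilde{F}\in\widetilde{\mathcal{F}}:\, u_{\widetilde{F}}=u} m(\widetilde{F}),\qquad u\in[K].
\]
Since $\sum_{\widetilde{F}} m(\widetilde{F})=1$ and each probability is nonnegative, ${\rm P}\in\mathcal{P}([K])$. Then for any $A\subseteq[K]$,
\[
{\rm P}(A)=\sum_{\widetilde{F}:\, u_{\widetilde{F}}\in A} m(\widetilde{F}).
\]

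The crux is then to check ${\rm P}(A)\geq {\rm Bel}(A)$ for every $A\subseteq[K]$. The key observation is that if $u_{\widetilde{F}}\notin A$, then $\max_{u\notin A}\pi_{\widetilde{F}}(u)\geq\pi_{\widetilde{F}}(u_{\widetilde{F}})=1$, which via~(\ref{necdistr}) forces ${\rm N}_{\widetilde{F}}(A)=0$. Hence in the sum (\ref{necdistr1}) only the fuzzy focal sets with $u_{\widetilde{F}}\in A$ survive, and for each of them ${\rm N}_{\widetilde{F}}(A)\leq 1$. Thus
\[
{\rm Bel}(A)=\sum_{\widetilde{F}:\, u_{\widetilde{F}}\in A} m(\widetilde{F}){\rm N}_{\widetilde{F}}(A)\leq \sum_{\widetilde{F}:\, u_{\widetilde{F}}\in A} m(\widetilde{F})={\rm P}(A),
\]
showing ${\rm P}\in\mathcal{P}(m)$ and completing the proof.

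I do not anticipate a serious obstacle: the only non-routine ingredient is recognising that normality of $\widetilde{F}$ lets us collapse the whole mass to a single point that automatically kills the necessity on sets avoiding that point. Once this ``core-point'' construction is in hand, the verification reduces to a one-line comparison between two sums restricted to the same index set $\{\widetilde{F}:u_{\widetilde{F}}\in A\}$. The argument also specialises cleanly to the crisp case of Section~\ref{secd}, where $u_{\widetilde{F}}$ is any element of the ordinary focal set $F$, recovering the classical construction of a compatible probability distribution from a mass function.
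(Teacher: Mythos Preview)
Your argument is correct. The normality assumption gives you the core point $u_{\widetilde{F}}$, and the observation that $u_{\widetilde{F}}\notin A$ forces ${\rm N}_{\widetilde{F}}(A)=0$ is exactly what makes the comparison with ${\rm P}(A)$ collapse to a trivial inequality. The verification that ${\rm P}$ is a probability distribution is immediate from $\sum_{\widetilde{F}}m(\widetilde{F})=1$.

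Your route differs substantially from the paper's. The paper does not construct an explicit element of $\mathcal{P}(m)$; instead it argues abstractly: it first shows that each necessity measure ${\rm N}_{\widetilde{F}}$ is supermodular (using the minitivity axiom ${\rm N}(A\cap B)=\min\{{\rm N}(A),{\rm N}(B)\}$), then observes that ${\rm Bel}$, being a nonnegative combination of supermodular functions, is itself supermodular, and finally invokes the classical result that the core of a convex (supermodular) cooperative game is nonempty. Your approach is more elementary and fully self-contained --- it needs no external theorems from cooperative game theory and produces a concrete witness. The paper's approach, by contrast, situates $\mathcal{P}(m)$ within the theory of convex games, which in principle yields more (e.g.\ structural information about the core, such as its extreme points via the Shapley greedy rule), but for the bare nonemptiness claim your direct construction is shorter and arguably preferable.
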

\begin{proof}
See Appendix~\ref{app:b}.
\end{proof}

In the following, we will show that the possibilistic model of uncertainty can be reduced to the model described in Section~\ref{secrob}. Therefore,  all the results shown in Sections~\ref{secsol} and~\ref{secappr} also apply to it.
Consider a fuzzy set $\tilde{F}$ in $[K]$. Let $0<\alpha_1<\dots<\alpha_{f}=1$ be an ordered sequence of increasing numbers such that the sets ($\alpha_i$-cuts or $\alpha_i$-level sets)
$$\tilde{F}^{[\alpha_i]}=\{k\in [K]: \pi_{\tilde{F}}(k)\geq \alpha_i\}$$
form a family of nested sets
$$\widetilde{F}^{[\alpha_1]}\supset \widetilde{F}^{[\alpha_2]}\supset \cdots\supset \widetilde{F}^{[\alpha_f]}.$$
Define $m_{\tilde{F}}(\widetilde{F}^{[\alpha_1]})=\alpha_1$,  $m_{\tilde{F}}(\widetilde{F}^{[\alpha_i]})=\alpha_i-\alpha_{i-1}$ for $i=2,\dots,f$ and $m_{\tilde{F}}(B)=0$ for all the remaining subsets of $[K]$. Observe that 
\begin{equation}
\label{propm}
\sum_{A\subseteq [K]} m_{\tilde{F}}(A)=\sum_{i=1}^{f} m_{\tilde{F}}(F^{[\alpha_i]})=1.
\end{equation}

\begin{prop}
\label{prop2}
For any subset $A\subseteq [K]$ the following equation is true:
\begin{equation}
\label{nrepr}
{\rm N}_{\tilde{F}}(A)=\sum_{B\subseteq A} m_{\tilde{F}}(B).
\end{equation}
\end{prop}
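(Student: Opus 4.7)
The plan is to exploit the fact that $m_{\tilde{F}}$ is supported only on the $\alpha_i$-cuts of $\tilde{F}$, so the sum on the right-hand side collapses to a sum over those cuts that happen to lie inside $A$, and this set of indices is determined by a single threshold. Telescoping then recovers the necessity on the left.

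More concretely, I would first note that since $m_{\tilde{F}}(B)=0$ whenever $B\notin\{\tilde{F}^{[\alpha_1]},\dots,\tilde{F}^{[\alpha_f]}\}$, the claim reduces to showing
\[
{\rm N}_{\tilde{F}}(A)=\sum_{i\in[f]:\ \tilde{F}^{[\alpha_i]}\subseteq A} m_{\tilde{F}}(\tilde{F}^{[\alpha_i]}).
\]
Next I would characterize the inclusion: by definition, $\tilde{F}^{[\alpha_i]}\subseteq A$ iff every $u$ with $\pi_{\tilde{F}}(u)\geq\alpha_i$ lies in $A$, equivalently $\max_{u\notin A}\pi_{\tilde{F}}(u)<\alpha_i$. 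So writing $M:=\max_{u\notin A}\pi_{\tilde{F}}(u)$ (with the convention $M=0$ if $A=[K]$), the indices contributing to the sum are exactly those $i$ with $\alpha_i>M$.

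The main calculation is then a telescoping argument. I distinguish two cases. If $M=0$, then every cut is included (since each $\alpha_i>0$), and by the identity (\ref{propm}) the sum equals $1={\rm N}_{\tilde{F}}(A)$. Otherwise $M$ coincides with some $\alpha_j$ (because $M$ is attained at some $u\notin A$, so $M$ is a positive value of $\pi_{\tilde{F}}$, hence one of the $\alpha_i$'s, which by construction exhaust the positive levels of $\pi_{\tilde{F}}$). Then the relevant indices are $i=j+1,\dots,f$, and
\[
\sum_{i=j+1}^{f} m_{\tilde{F}}(\tilde{F}^{[\alpha_i]})=(\alpha_{j+1}-\alpha_j)+\cdots+(1-\alpha_{f-1})=1-\alpha_j=1-M={\rm N}_{\tilde{F}}(A),
\]
where the last equality uses the definition (\ref{necdistr}).

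I do not expect a real obstacle here; the only subtle point is handling edge cases cleanly, namely $A=[K]$ (empty complement) and the case where $M$ falls strictly between two consecutive $\alpha_i$'s. The former is covered by the convention on the empty max, and the latter cannot actually occur once one observes that the $\alpha_i$'s must include all positive values taken by $\pi_{\tilde{F}}$ on $[K]$; otherwise the cuts would not form the required strictly decreasing chain. A brief sentence justifying this structural remark about the $\alpha_i$'s is the one step I would take care to write out explicitly.
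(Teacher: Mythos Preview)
Your proof is correct and follows essentially the same approach as the paper: reduce the sum to the cuts contained in $A$, identify the threshold index, and telescope. The only cosmetic difference is that you organize the argument around the quantity $M=\max_{u\notin A}\pi_{\tilde{F}}(u)$, whereas the paper splits into the three cases ``no cut contained in $A$'', ``all cuts contained in $A$'', and ``smallest $j$ with $\tilde{F}^{[\alpha_j]}\subseteq A$'', but the telescoping computation is identical. One small remark: your justification that $M$ must equal some $\alpha_j$ relies on the $\alpha_i$'s being exactly the distinct positive values of $\pi_{\tilde{F}}$; strict nesting of the cuts alone does not force this (one could pick any $\alpha_i$ in the half-open interval between consecutive values), so it is really the paper's intended convention---visible in its worked example---rather than a consequence of the chain condition. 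The paper's own proof tacitly uses the same assumption when it asserts $\min_{u\notin A}(1-\pi_{\tilde{F}}(u))=1-\alpha_{j-1}$, so you are no less rigorous than the original.
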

\begin{proof}
If $\tilde{F}^{[\alpha_i]}\not\subseteq A$ for each $i\in [f]$, then ${\rm N}_{\tilde{F}}(A)=\min_{u\notin A}(1-\pi_{\widetilde{F}}(u))=0$ and~(\ref{nrepr}) is true. If $\tilde{F}^{[\alpha_i]}\subseteq A$ for each $i\in [f]$, then ${\rm N}_{\tilde{F}}(A)=\min_{u\notin A}(1-\pi_{\widetilde{F}}(u))=1$  and~(\ref{nrepr}) is also true.
Let $j=2,\dots,f$ be the smallest number such that $\tilde{F}^{[\alpha_j]}\subseteq\dots \subseteq \tilde{F}^{[\alpha_f]}\subseteq A$. Then,
$${\rm N}_{\tilde{F}}(A)=\min_{u\notin A}(1-\pi_{\tilde{F}}(u))=1-\alpha_{j-1}.$$
On the other hand,
$$\sum_{B\subseteq A} m_{\tilde{F}}(B)=\sum_{i=j}^f m_{\tilde{F}}(\widetilde{F}^{[\alpha_i]})=(1-\alpha_{f-1})+(\alpha_{f-1}-\alpha_{f-2})+\dots+(\alpha_{j}-\alpha_{j-1})=1-\alpha_{j-1},$$
which implies~(\ref{nrepr}).
\end{proof}

Using Proposition~\ref{prop2} we can represent~(\ref{necdistr1}) in the following way:

\begin{equation}
\label{necdistr2}
{\rm Bel}(A)=\sum_{\widetilde{F}\in \mathcal{\widetilde{F}}} m(\widetilde{F}){\rm N}_{\widetilde{F}}(A)=\sum_{\widetilde{F}\in \mathcal{\widetilde{F}}} m(\widetilde{F})\sum_{B\subseteq A} m_{\tilde{F}}(B)=\sum_{B\subseteq A}  \sum_{\widetilde{F}\in \mathcal{\widetilde{F}}} m(\widetilde{F}) m_{\tilde{F}}(B).
\end{equation}
Define 
$$m'(A)=\sum_{\widetilde{F}\in \mathcal{\widetilde{F}}} m(\widetilde{F}) m_{\tilde{F}}(A), \; A\subseteq [K].$$
Using~(\ref{propm}) and the fact that $m$ is a mass function in $\mathcal{F}$, we conclude that $m'(A)\geq 0$ for any $A\subseteq [K]$ and $\sum_{A\subseteq [K]} m'(A)=1$. Therefore, $m'$ is a mass function defined in  $2^{[K]}$ and the belief function~(\ref{necdistr1}) can be equivalently defined as
$${\rm Bel}(A)=\sum_{B\subseteq A} m'(A).$$
This fact allows us to use all the results from Sections~\ref{secsol} and~\ref{secappr} for the possibilistic model of uncertainty. Notice also that the size of the focal set $\mathcal{F}'$ for the new mass function $m'$ is at most $K\ell$, so it is polynomially bounded by $K$ and $\ell$.

\begin{figure}[h!t]
\centering
\includegraphics{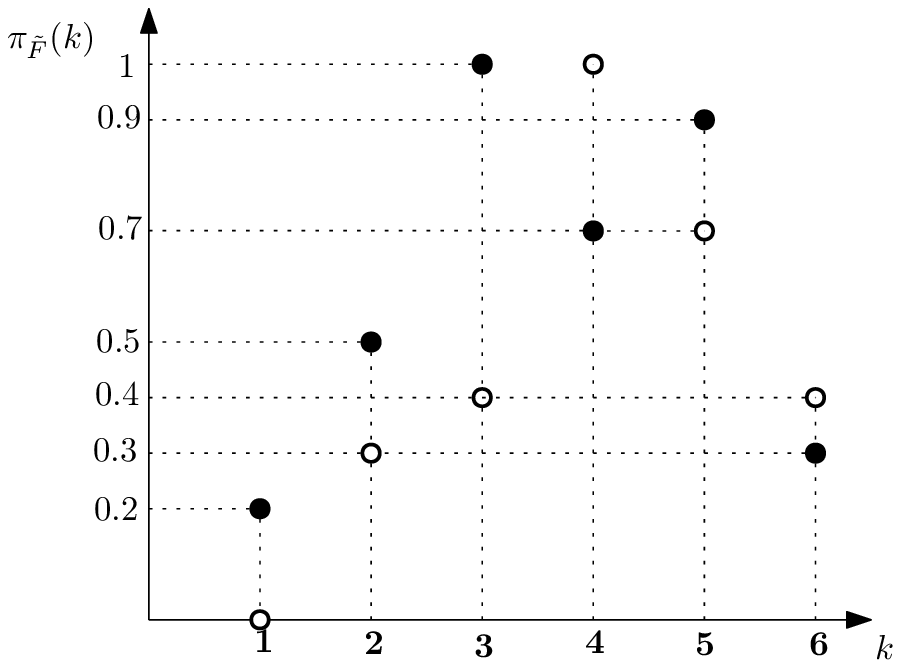}
\caption{Two fuzzy sets with $\pi_{\tilde{F}_1}: 1 \mapsto 0.2$, $2\mapsto 0.5$, $3\mapsto 1$, $4\mapsto 0.7$, $5\mapsto 0.9$, $6\mapsto 0.3$ (filled circles) and $\pi_{\tilde{F}_2}: 1\mapsto 0$, $2\mapsto 0.3$, $3\mapsto 0.4$, $4\mapsto 1$, $5\mapsto 0.7$, $6\mapsto 0.4$ (blank circles) for $K=6$ scenarios.} \label{figsamppos}
\end{figure}

Consider a sample set $\tilde{\mathcal{F}}=\{\tilde{F}_1, \tilde{F}_2\}$ with $m(\tilde{F}_1)=0.4$, $m(\tilde{F}_2)=0.6$ for $K=6$ scenarios shown in Figure~\ref{figsamppos}. For the fuzzy set $\tilde{F}_1$, we form the sequence $0<0.2<0.3<0.5<0.7<0.9<1$, which yields $\mathcal{F}^{[0.2]}=\{1,2,3,4,5,6\}$, $\mathcal{F}^{[0.3]}=\{2,3,4,5,6\}$, $\mathcal{F}^{[0.5]}=\{2,3,4,5\}$, $\mathcal{F}^{[0.7]}=\{3,4,5\}$, $\mathcal{F}^{[0.9]}=\{3,5\}$, $\mathcal{F}^{[1]}=\{3\}$. For the fuzzy set $\tilde{F}_2$ the corresponding sequence is $0<0.3<0.4<0.7<1$ and $\mathcal{F}^{[0.3]}=\{2,3,4,5,6\}$, $\mathcal{F}^{[0.4]}=\{3,4,5,6\}$, $\mathcal{F}^{[0.7]}=\{4,5\}$ and $\mathcal{F}^{[1]}=\{4\}$. The mass function $m'$ for $\tilde{\mathcal{F}}$ is shown in Table~\ref{tab2}.

\begin{table}[ht]
\centering
\caption{The mass function $m'(A)$ for $\tilde{\mathcal{F}}=\{\tilde{F}_1, \tilde{F}_2\}$.} \label{tab2}
\begin{tabular}{llllll}
$A$ & $m_{\tilde{F}_1}(A)$ & $m_{\tilde{F}_2}(A)$ & $m'(A)$\\ \hline
$\{1,2,3,4,5,6\}$  & 0.2 & 0 & 0.08\\ 
$\{2,3,4,5,6\}$  & 0.1 & 0.3 & 0.22\\
$\{2,3,4,5\}$ & 0.2 & 0 & 0.08\\
$\{3,4,5\}$  & 0.2  & 0 & 0.08\\
$\{3,5\}$ & 0.2 & 0 & 0.08\\
$\{3\}$ & 0.1 & 0 & 0.04\\
$\{3,4,5,6\}$ & 0 & 0.1 & 0.06\\
$\{4,5\}$ & 0 & 0.3 & 0.18\\
$\{4\}$ & 0 & 0.3 & 0.18
\end{tabular}
\end{table}

\section{Conclusions}

In this paper, we discussed a new approach to optimization under uncertainty in the objective function. We assumed that the uncertainty is described by a belief function, which is based on masses assigned to subsets of scenarios. The masses represent subjective probabilities of the respective events (subsets of scenarios) and can be provided by the 
decision maker to take into account additional knowledge about the problem parameters. The belief function induces a set of probability distributions such that the probability of each event is lower bounded by its belief. We used the generalized Hurwicz criterion to compute a solution. This criterion allows us to establish a trade-off between the worst-case and the best-case performance of solutions over all  probability distributions compatible with the belief function. The worst-case performance is consistent with  distributionally robust optimization, which has attracted considerable attention recently.
For some specific choices of the mass function, our approach leads to optimizing the OWA criterion, which has been well studied on its own.

In this paper, we identified a special case of the uncertain linear programming problem which is NP-hard and also hard to approximate. On the other hand, we described some cases of the  problem considered which can be solved or approximated in polynomial time. Finally, we extended our setting using fuzzy sets with a possibilistic interpretation. In this case, scenarios belong to focal sets with some degrees between zero and one. We showed that the fuzzy case can be reduced in polynomial time to the previous, traditional one,  and all the results for the traditional case can still be applied.

\subsubsection*{Acknowledgements}
Marc Goerigk was partially supported by the Deutsche Forschungsgemeinschaft (DFG), grant GO 2069/2-1.
Adam Kasperski and Pawe{\l} Zieli{\'n}ski were supported by
 the National Science Centre, Poland, grant 2022/45/B/HS4/00355.


\appendix

\section{Proof of equations~(\ref{ue}) and~(\ref{le})}\label{app:a}
\begin{proof}
Any element ${\rm P}$ of $\mathcal{P}(m)$ can be obtained as follows (see~\cite{D67, TD20}). Let $a(k,F)\geq 0$, $k\in [K]$, $F\in \mathcal{F}$, be nonnegative numbers satisfying the following system of equations:
\begin{equation}
\label{e1}
\sum_{k \in F} a(k,F)=m(F),\; F\in \mathcal{F}.
\end{equation}
Hence $a(k,F)$ can be interpreted as a part of the mass $m(F)$ of a focal set $F\in \mathcal{F}$  assigned to scenario $k\in [K]$.
 Then
\begin{equation}
\label{e2}
p_k=\sum_{\{F\in \mathcal{F}:\;k \in F\}} a(k,F),\; k \in [K].
\end{equation}

Using equations~(\ref{e1}) and~(\ref{e2}), we can represent the left-hand side of~(\ref{ue}) as the following linear program:
$$
	\begin{array}{lll}
	\max  & \displaystyle \sum_{k\in [K]}  \sum_{\{F\in \mathcal{F}: k \in F\}} a(k,F) f(\pmb{x},\pmb{c}_k)\\
		 & \displaystyle \sum_{k \in F} a(k,F)=m(F) & F\in \mathcal{F} \\
		 & a(k,F)\geq 0 & F\in \mathcal{F}, k\in [K]
\end{array}
$$
After exchanging the summation in the objective, we get the equivalent formulation:
\begin{equation}
\label{ee0}
	\begin{array}{lll}
	\max  & \displaystyle  \sum_{F\in \mathcal{F}} \sum_{k\in F}  a(k,F) f(\pmb{x},\pmb{c}_k)\\
		 & \displaystyle \sum_{k \in F} a(k,F)=m(F) & F\in \mathcal{F}\\
		 & a(k,F)\geq 0 & F\in \mathcal{F}, k\in [K]
\end{array}
\end{equation}
It is easy to verify that in an optimal solution to~(\ref{ee0}) one can fix $a(k^*,F)=m(F)$ for $k^*=\arg \max_{k\in F} f(\pmb{x},\pmb{c}_k)$ and  $a(k,F)=0$ for $k\in F\setminus \{k^*\}$ for each $F\in \mathcal{F}$. Hence,~(\ref{ee0}) is equivalent to the right-hand side of~(\ref{ue}). The proof of~(\ref{le}) is analogous.
\end{proof}

\section{Proof of Proposition~\ref{propnon}}\label{app:b}
\begin{proof}
We show first that the necessity measure is \emph{supermodular}, i.e.
\[
{\rm N}(A \cup B)+{\rm N}(A \cap B)\geq {\rm N}(A)+{\rm N}(B)
\]
for any $A,B\subseteq [K]$. Indeed ${\rm N}(A \cup B)\geq \max\{{\rm N}(A), {\rm N}(B)\}$, since
$1-\max_{u\in (A\cup B)^{c}}\pi(u)\geq 1-\max_{u\in A^{c}}\pi(u)$ and 
$1-\max_{u\in (A\cup B)^{c}}\pi(u)\geq 1-\max_{u\in B^{c}}\pi(u)$. Hence,
${\rm N}(A \cup B)+{\rm N}(A \cap B)\geq \max\{{\rm N}(A), {\rm N}(B)\}+\min\{{\rm N}(A), {\rm N}(B)\}={\rm N}(A)+{\rm N}(B)$. The equation ${\rm N}(A \cap B)=\min\{{\rm N}(A), {\rm N}(B)\}$ follows from the \emph{minitivity axiom} which the necessity measure satisfies~\cite{DP88}.
The supermodularity can be stated equivalently (see, e.g.,~\cite{DTH88}) as
\begin{equation}
\label{esup}
{\rm N}(A\cup\{i\})-{\rm N}(A)\geq {\rm N}(B\cup\{i\})-{\rm N}(B)
\end{equation}
for each $B\subseteq A$ and $i\notin A$. Now it is easy to see that the belief function ${\rm Bel}$ is also supermodular, as
${\rm Bel}(A\cup\{i\})-{\rm Bel}(A)\geq {\rm Bel}(B\cup\{i\})-{\rm Bel}(B)$
for each $B\subseteq A$ and $i\notin A$. This inequality follows immediately from~(\ref{esup}) and the fact that the belief function is a conic (nonnegative) linear combination of necessity measures.
The constraints~(\ref{consp}) for $\mathcal{P}(m)$ describe a core of a convex game with player set $[K]$ and characteristic function ${\rm Bel}$. The core (and thus $\mathcal{P}(m)$) of such games is nonempty (see, e.g.,~\cite{SC67}).
\end{proof}

\end{document}